\newcommand{\Mod}[1]{\ (\mathrm{mod}\ #1)}
\let\MYcaption\@makecaption
\let\@makecaption\MYcaption
\title{Nearest Neighbor Representations of Neural Circuits}
\author{%
   \IEEEauthorblockN{\textbf{Kordag Mehmet Kilic}\IEEEauthorrefmark{1}, \textbf{Jin Sima}\IEEEauthorrefmark{2} and \textbf{Jehoshua Bruck}\IEEEauthorrefmark{1}}
   \IEEEauthorblockA{\IEEEauthorrefmark{1}%
   Electrical Engineering, California Institute of Technology, USA, \texttt{\{kkilic,bruck\}@caltech.edu}
   }
   \IEEEauthorblockA{\IEEEauthorrefmark{2}%
   Electrical and Computer Engineering, University of Illinois Urbana-Champaign, USA, \texttt{jsima@illinois.edu}
   }
 }
\newtheorem{theorem}{Theorem}
\newtheorem{corollary}{Corollary}[theorem]
\newtheorem{definition}{Definition}
\newtheorem{conjecture}{Conjecture}
\DeclareMathOperator {\diag}{diag}
\tikzset{
    ->,
    gate/.style={draw=black,fill=#1,minimum width=5mm,circle},
    square/.style={minimum width=6mm,regular polygon,regular polygon sides=4},
    triangle/.style={minimum width=4mm,regular polygon, regular polygon sides=3},
    every pin edge/.style={draw=black},
    GateCfg/.style={
            logic gate inputs={normal,normal,normal},
            draw,
            scale=2
        }
}
\begin{document}

\maketitle
\thispagestyle{plain}
\pagestyle{plain}

\begin{abstract}
    Neural networks successfully capture the computational power of the human brain for many tasks. Similarly inspired by the brain architecture, Nearest Neighbor (NN) representations is a novel approach of computation. We establish a firmer correspondence between NN representations and neural networks. Although it was known how to represent a single neuron using NN representations, there were no results even for small depth neural networks. Specifically, for depth-2 threshold circuits, we provide explicit constructions for their NN representation with an explicit bound on the number of bits to represent it. Example functions include NN representations of convex polytopes (AND of threshold gates), IP2, OR of threshold gates, and linear or exact decision lists.
\end{abstract}

\section{Introduction}
\label{sec:intro}

The study of Nearest Neighbors (NN) started in 1950s for regression and classification and they remained one of the most fundamental models in machine learning \cite{fix1951discriminatory}. Intuitively, each concept is thought to be a real-valued vector in real space and the closer the concepts are, the more ``similar'' they should be. Therefore, information theoretic aspects of NN models were intriguing and in this regard, the minimum size and description error trade-off was questioned, which became a practical issue for NN models called the infamous \textit{curse of dimensionality} \cite{cover1967nearest,weber1998quantitative,beyer1999nearest}. In addition, NN models gained more attention recently since Natural Language Processing (NLP) pipelines may contain large vector databases to store high dimensional real-valued \textit{embeddings} of words, sentences, or even images \cite{johnson2019billion,lewis2020retrieval}. Thus, the theoretical limits of the database sizes, the approximations of the NN search algorithms, and the design of ``meaningful'' embeddings became important research directions \cite{indyk1998approximate,kushilevitz1998efficient,malkov2018efficient,reimers2019sentence}. An example use of vector databases in NLP involving NN search algorithms is given in Figure \ref{fig:rag}.

Alternatively, NNs could play a role in our understanding on how the brain works. Inspired by the architecture of the brain, \textit{neural networks} became the backbone of many machine learning models, and similarly, \textit{Nearest Neighbor (NN) Representations} are introduced as emerging models of computation \cite{minsky2017perceptrons,hajnal2022nearest,kilic2023information}. NN representations mimic the integrated structure of memory and computation in the brain where the concepts in the memory are embedded in \textit{anchors}. They are specifically defined for Boolean functions where the input vectors are binary vectors, i.e., $X \in \{0,1\}^n$, and the labels can be either \textbf{\textcolor{red}{0}} (\textbf{\textcolor{red}{red}}) or \textcolor{blue}{1} (\textcolor{blue}{blue}). We always use a single NN to find the correct label, i.e., the closest anchor gives the output label of the input vector $X$. Two examples of NN representations for $2$-input Boolean functions AND and OR are given in Figure \ref{fig:and-or}.

\begin{figure}[h!]
    \centering
    \includegraphics[width=\linewidth]{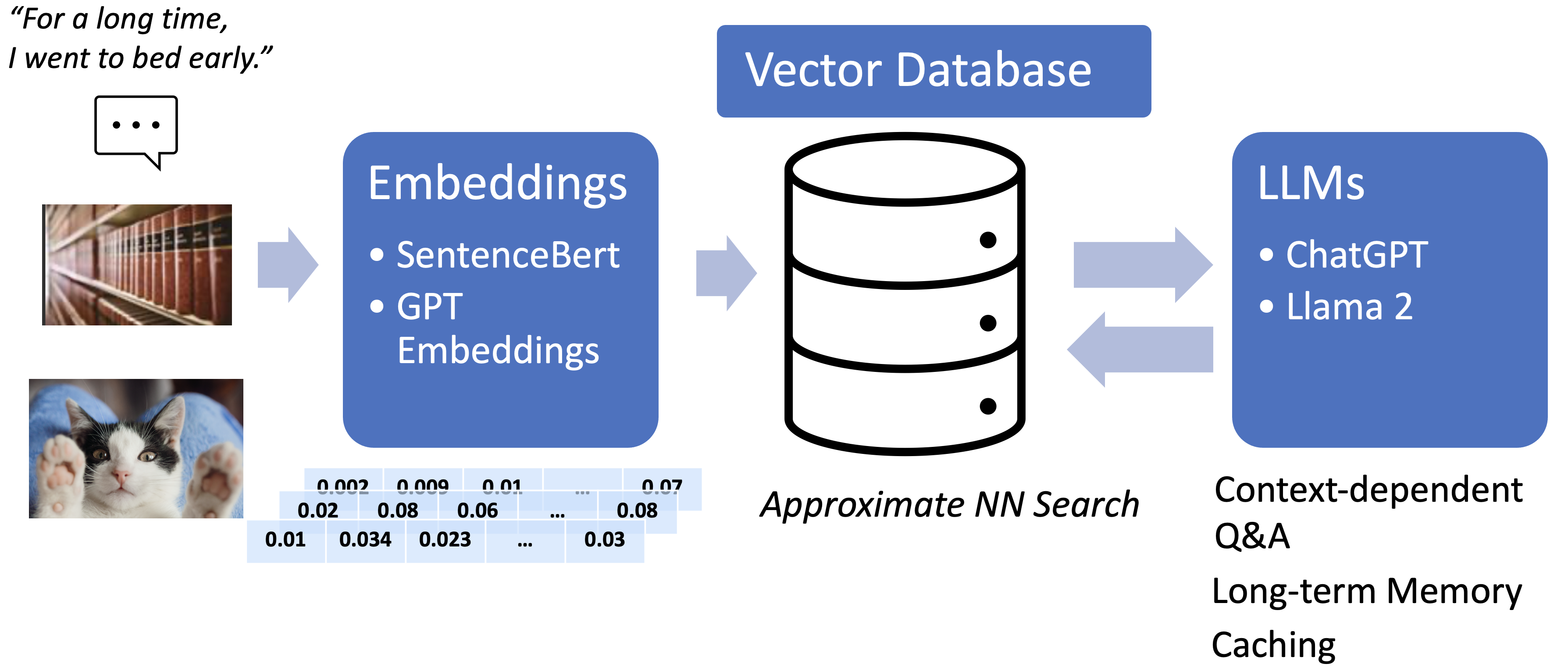}
    \caption{An illustration of an \textit{Retrieval Augmented Generation (RAG)} pipeline in NLP \cite{lewis2020retrieval}. Each ``document'', which can be a single sentence, a book, or an image of a cat, is converted into a high-dimensional embedding by \textit{Embedding Models} and is stored in a vector database. To do knowledge-intensive tasks, similar ``documents'' can be found by approximate NN search algorithms.}
    \label{fig:rag}
\end{figure}




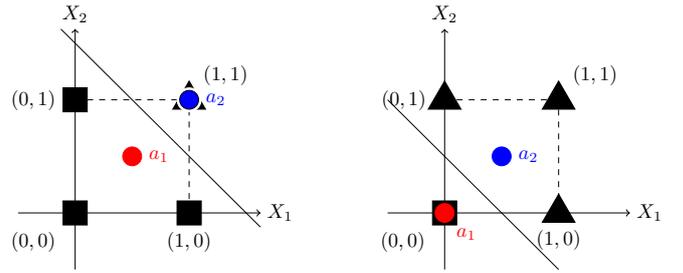
\begin{figure}[h!]
    \centering
    \begin{floatrow}
    \ffigbox[\FBwidth]
    {
    \scalebox{0.75}{
        \begin{tikzpicture}
        
        \draw[->] (-1,0) -- (3.25,0) node[right] {$X_1$}; 
        \draw[->] (0,-1) -- (0,3.25) node[above] {$X_2$};
        
        \newcommand{\comma}{,}
        \coordinate (c00) at (0,0);
        \coordinate (c01) at (0,2);
        \coordinate (c10) at (2,0);
        \coordinate (c11) at (2,2);
        
        \node[gate,square,color=black,label=225:$(0\comma0)$] (a00) at (0,0) {};
        \node[gate,square,color=black,label=180:$(0\comma1)$] (a01) at (0,2) {};
        \node[gate,square,color=black,,label=270:$(1\comma0)$] (a10) at (2,0) {};
        \node[gate,triangle,color=black,,label=45:$(1\comma1)$] (a11) at (2,2) {};
        
        \draw[dashed] (a01|-a11) -- (a11) -- (a11|-a10);
        
        \node[gate,minimum width=3mm,color=red,label=0:$\textcolor{red}{a_1}$] (anchor-1) at (1,1) {};
        
        \node[gate,minimum width=4.2mm,fill=white,draw=white] (anchor-2) at (2,2) {};
        \node[gate,minimum width=3mm,fill=blue,label=0:$\textcolor{blue}{a_2}$] (anchor-2) at (2,2) {};
        
        \coordinate (cy) at (-0.25,3.25);
        \coordinate (cx) at (3.25,-0.25);
        \draw[-] (cy) -- (cx);
        
        \end{tikzpicture}
}
\label{fig:and_coor}

}

{
\scalebox{0.75}{
        \begin{tikzpicture}
            
            \draw[->] (-1,0) -- (3.25,0) node[right] {$X_1$}; 
            \draw[->] (0,-1) -- (0,3.25) node[above] {$X_2$};
            
            \newcommand{\comma}{,}
            \coordinate (c00) at (0,0);
            \coordinate (c01) at (0,2);
            \coordinate (c10) at (2,0);
            \coordinate (c11) at (2,2);
            
            \node[gate,square,color=black,label=225:$(0\comma0)$] (a00) at (0,0) {};
            \node[gate,triangle,color=black,label=180:$(0\comma1)$] (a01) at (0,2) {};
            \node[gate,triangle,color=black,,label=270:$(1\comma0)$] (a10) at (2,0) {};
            \node[gate,triangle,color=black,,label=45:$(1\comma1)$] (a11) at (2,2) {};
            
            \draw[dashed] (a01|-a11) -- (a11) -- (a11|-a10);
            
            \node[gate,minimum width=0.5mm,color=red,label=300:$\textcolor{red}{a_1}$] (anchor-1) at (0,0) {};
            
            \node[gate,minimum width=0.5mm,color=blue,label=0:$\textcolor{blue}{a_2}$] (anchor-2) at (1,1) {};
            
            \coordinate (cy) at (-1,2);
            \coordinate (cx) at (2,-1);
            \draw[-] (cy) -- (cx);
            
        \end{tikzpicture}
    }
}
\end{floatrow}
    \caption{NN representations for $2$-input Boolean functions $\text{AND}(X_1,X_2)$ (left) and $\text{OR}(X_1,X_2)$ (right). Triangles denote $f(X) = 1$ and squares denote $f(X) = 0$. It can be seen that \textbf{\textcolor{red}{red}} anchors are closest to squares and  \textcolor{blue}{blue} anchors are closest to triangles. A separating line between anchors pairs is drawn.}
    \label{fig:and-or}
\end{figure}

Given that neural networks and NN representations draw inspiration from the architecture of the brain, how do they relate? Could we find an NN representation for a neural network? A more ambitious pursuit would involve exploring whether we can substitute a Large Language Model (LLM) with an equivalent NN representation. In this work, we start answering this question by finding NN representations for a family of depth-2 neural networks, which was not done before. We now introduce the formal definitions for NN representations. Let $d(a,b)$ be the Euclidean distance between two real vectors $a$ and $b$. We note that we are naturally interested in smallest size NN representations for a given Boolean function, which is called the \textit{NN complexity}.

\begin{definition}
    The \textup{Nearest Neighbor (NN) Representation} of a Boolean function $f$ is a set of anchors consisting of the disjoint subsets $(P,N)$ of $\mathbb{R}^n$ such that for every $X \in \{0,1\}^n$ with $f(X) = 1$, there exists $p \in P$ such that for every anchor $n \in N$, $d(X,p) < d(X,n)$, and vice versa. The \textup{size} of the NN representation is $|P \cup N|$.
\end{definition}

\begin{definition}
    The \textup{Nearest Neighbor Complexity} of a Boolean function $f$ is the minimum size over all NN representations of $f$, denoted by $NN(f)$.
\end{definition}
To quantify the information capacity of an NN representation, we consider the amount of bits to represent an anchor in real space. Without loss of generality, one can assume that the entries can be rational numbers when the inputs are discrete. We define \textit{anchor matrix} $A \in \mathbb{Q}^{m\times n}$ of an NN representation where each row is an $n$-dimensional anchor and the size of the representation is $m$. The \textit{resolution of an NN representation} is the maximum number of bits required to represent an entry of the anchor matrix.

\begin{definition}
    The \textup{resolution ($RES$) of a rational number} $a/b$ is $RES(a/b) = \lceil \max\{\log_2{|a+1|},\log_2{|b+1|}\}\rceil$ where $a,b \in \mathbb{Z}$, $b \neq 0$, and they are coprime.
    
    For a matrix $A \in \mathbb{Q}^{m\times n}$, $RES(A) = \max_{i,j} RES(a_{ij})$. The \textup{resolution of an NN representation} is $RES(A)$ where $A$ is the corresponding anchor matrix.
\end{definition}

In Figure \ref{fig:and-or}, the $2$-input AND function has two anchors $\textcolor{red}{a_1} = (0.5,0.5)$ and $\textcolor{blue}{a_2} = (1,1)$. By using the definition, we see that the resolution of this representation is $\lceil \log_2{3} \rceil = 2$.

To find a NN representation for a depth-2 neural network, the first step is to understand the NN representations of a single neuron. Historically, each neuron in a neural network was a \textit{linear threshold function}. A linear threshold function is a weighted linear summation fed to a step function. Equivalently, for a linear threshold function $f(X)$, we have $f(X) = \mathds{1}\{w^T X \geq b\}$ where $\mathds{1}\{.\}$ is an indicator function with outputs $\{0,1\}$, $w \in \mathbb{Z}^n$ is an integer weight vector, $X \in \{0,1\}^n$ is the binary input vector, and $b \in \mathbb{Z}$ is a bias term. Similarly, we define an \textit{exact threshold function} using an equality check, i.e., $f(X) = \mathds{1}\{w^T X = b\}$. Together, we refer to both of them as \textit{threshold functions} and a device computing a threshold function is called a \textit{threshold gate}. 

It is already known that a Boolean function with $NN(f) = 2$ must be a linear threshold function with resolution $O(n\log{n})$ \cite{hajnal2022nearest,kilic2023information}. Remarkably, neurons in a neural network have the smallest NN complexity. NN representation size and resolution trade-offs for threshold functions are treated in \cite{kilic2024nearest}.

Let $\text{LT}$ denote the set of linear threshold functions and $\text{ELT}$ denote the set of exact threshold functions. We use $\circ$ to compose threshold gates into circuits, e.g., $\text{ELT}\circ\text{LT}$ is a depth-2 threshold circuit where an exact threshold gate is on the top (output) layer and the bottom layer has linear threshold gates. In Figure \ref{fig:ldl_dom}, a $5$-input threshold circuit in $\text{LT}\circ\text{LT}$ is given.

In a neural network, what does happen if we use devices with higher NN complexity in each building block? We delve into this question by considering another family of Boolean functions, called \textit{symmetric Boolean functions}. Symmetric Boolean functions have the same output value among $X$s with the same number of $1$s. We use SYM to denote the set of symmetric Boolean functions. These functions are significant because they appear in other complexity theory results and an important property is that any $n$-input Boolean function can be interpreted as a $2^n$-input symmetric Boolean function \cite{bruck1990harmonic,haastad1995top,siu1991depth,stockmeyer1976combinational}.

The NN complexity of symmetric Boolean functions is treated in previous works \cite{hajnal2022nearest,kilic2023information}. An important result is that the NN complexity of a symmetric Boolean function is at most the number of \textit{intervals} it contains \cite{kilic2023information}. Let $|X|$ denote the number of $1$s of a binary vector $X \in \{0,1\}^n$. 

\begin{definition}
    An $n$-input Boolean function is called \textup{symmetric} if $f(X)$ has the same value for all permutations of the input $X$. Namely, a symmetric Boolean function $f(X)$ is a function of $|X|$.

    An \textup{interval} of a symmetric Boolean function is the interval $[a,b]$ where $f(X)$ is constant for $|X| \in [a,b]$, $a \geq 0$, $b \leq n$, and decreasing $a$ or increasing $b$ is not possible. The $\text{j}^{th}$ interval is denoted by $[I_{j-1}+1,I_{j}]$ where $I_0 = -1$. The total number of intervals of $f(X)$ is denoted by $I(f)$.
\end{definition}

For a $5$-input symmetric Boolean function $f(X)$ with $I(f) = 4$ intervals, the interval boundaries are illustrated in Eq. \eqref{eq:sym_ex}. Intervals of a symmetric Boolean function is significant in other Circuit Complexity Theory results as well \cite{muroga1959logical,minnick1961linear,siu1991depth}.

\begin{equation}
    \label{eq:sym_ex}
    \begin{tabular}{c|c c c}
        $|X|$ & $f(X)$ \\
        \cline{1-2}
         0 & \textcolor{red}{\textbf{0}} & \hphantom{a} & $I_{1} = 0$\\
         1 & \textcolor{blue}{1} & \hphantom{a} & \\ 
         2 & \textcolor{blue}{1} & \hphantom{a} & $I_{2} = 2$ \\
         3 & \textcolor{red}{\textbf{0}} & \hphantom{a} & $I_3 = 3$\\
         4 & \textcolor{blue}{1} & \hphantom{a} & \\
         5 &  \textcolor{blue}{1}& \hphantom{a} & $I_{4} = 5$\\
    \end{tabular}
\end{equation}

By composing threshold functions and symmetric Boolean functions together, one can obtain different families of Boolean functions. For instance, a \textit{linear decision list} (denoted by LDL) of depth $m$ is a sequential list of linear threshold functions $f_1(X),\dots,f_m(X)$ where the output is $z_k \in \{0,1\}$ for $f_i(X) = 0$ for $i < k$ and $f_k(X) = 1$ and it is $z_{m+1} \in \{0,1\}$ if all $f_i(X) = 0$. An LDL can be computed in a depth-2 linear threshold circuit, i.e., $\text{LDL} \subseteq \text{LT}\circ\text{LT}$ by Eq. \eqref{eq:ldl_lt} \cite{turan1997linear}. One can similarly define the class of EDL where exact threshold functions are used and $\text{EDL} \subseteq \text{LT}\circ\text{ELT}$. We give an example of a $5$-input LDL of depth $3$ in Figure \ref{fig:ldl}.

\begin{equation}
    \label{eq:ldl_lt}
    l(X) = \mathds{1}\Big\{ \sum_{i=1}^m (-1)^{z_i-1} 2^{m-i}f_i(X) \geq 1-z_{m+1} \Big\}
\end{equation}

\begin{figure}[h!]
    \centering
    \begin{tikzpicture}
    \newdimen\nodeDist
    \nodeDist=20mm
    \node [rectangle,draw] (A) {\footnotesize$\mathds{1}\{x_1 + x_2 \geq 1\}$};
    \path (A) ++(-120:\nodeDist) node [rectangle,draw] (B) {\footnotesize$\mathds{1}\{2x_1 + x_3 + x_4 \geq 2\}$};
    \path (A) ++(-60:\nodeDist) node (C) {\textcolor{blue}{1}};
    \path (B) ++(-120:\nodeDist) node [rectangle,draw] (D) {\footnotesize$\mathds{1}\{x_2-x_5 \geq 0\}$};
    \path (B) ++(-60:\nodeDist) node (E) {\textcolor{red}{\textbf{0}}};
    \path (D) ++(-120:\nodeDist) node (F) {\textcolor{blue}{1}};
    \path (D) ++(-60:\nodeDist) node (G) {\textcolor{red}{\textbf{0}}};

    \draw (A) -- (B) node [left,pos=0.25] {0};
    \draw (A) -- (C) node [right,pos=0.25] {1};
    \draw (B) -- (D) node [left,pos=0.25] {0};
    \draw (B) -- (E) node [right,pos=0.25] {1};
    \draw (D) -- (F) node [left,pos=0.25] {0};
    \draw (D) -- (G) node [right,pos=0.25] {1};
\end{tikzpicture}
    \caption{A Linear Decision List of Depth 3 with $5$ binary inputs.}
    \label{fig:ldl}
\end{figure}
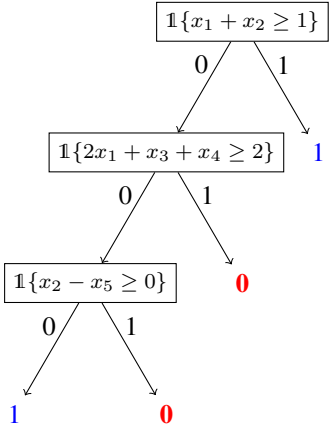

In Eq. \eqref{eq:ldl_lt}, the implied top gate (see Figure \ref{fig:ldl_dom}) is actually called a DOMINATION gate (denoted by DOM) because the leading one in the vector $(f_1(X),\dots,f_m(X))$ \textit{dominates} the importance of the others. Examples of DOM gates are treated in different contexts \cite{beigel1994perceptrons,kilic2021neural}. The definition in Eq. \eqref{eq:dom} uses an abuse of notation to emphasize that any sign combinations for the weights is allowed. In Figure \ref{fig:ldl_dom}, a depth-2 circuit construction of the linear decision list depicted in Figure \ref{fig:ldl} is given.

\begin{equation}
    \label{eq:dom}
    \text{DOM}(X) = \mathds{1}\Big\{\sum_{i=1}^n \pm 2^{n-i}x_i \geq b\Big\} 
\end{equation}

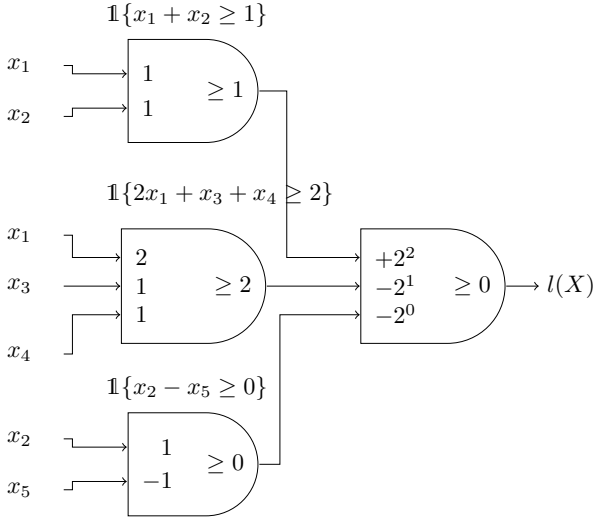
\begin{figure}[h!]
    \centering
    \scalebox{0.9}{\begin{tikzpicture}
    \tikzstyle{sum} = [gate=white,label=center:+]
    \tikzstyle{input} = [circle]
    \newcommand{\eq}{=}

    \node[input,label=180:$x_1$] (i-1) at (0,-0.5) {};
    \node[input,label=180:$x_2$] (i-2) at (0,-1.25) {};
    \node[input,label=0:$\mathds{1}\{x_1+x_2 \geq 1\}$] (EQ-12) at (0.5,0.25) {};
    
    \node[and gate US, draw, logic gate inputs=nn, scale=3] (TH1) at (2,-0.875) {};
        \draw (i-1) -| +(0.3,0) |- (TH1.input 1);
        \draw (i-2) -| +(0.3,0) |- (TH1.input 2);
    
    \node [right=0.1 of TH1.input 1] {$1$};
    \node [right=0.1 of TH1.input 2] {$1$};
    \node [left =0.1 of TH1.output] {$\geq 1$};
    
    \node[input,label=180:$x_1$] (i-4) at (0,-3) {};
    \node[input,label=180:$x_3$] (i-5) at (0,-3.75) {};
    \node[input,label=180:$x_4$] (i-6) at (0,-4.75) {};
    
    \node[input,label=0:$\mathds{1}\{2x_1 + x_3 + x_4 \geq 2\}$] (EQ-12) at (0.5,-2.4) {};
    \node[and gate US, draw, logic gate inputs=nnn, scale=2.5] (TH2) at (2,-3.75) {};
        \draw (i-4) -| +(0.3,0) |- (TH2.input 1);
        \draw (i-5) -| +(0.3,0) |- (TH2.input 2);
        \draw (i-6) -| +(0.3,0) |- (TH2.input 3);
        
    \node [right=0.1 of TH2.input 1] {$2$};
    \node [right=0.1 of TH2.input 2] {$1$};
    \node [right=0.1 of TH2.input 3] {$1$};
    \node [left =0.1 of TH2.output] {$\geq 2$};

    \node[input,label=180:$x_2$] (i-7) at (0,-6) {};
    \node[input,label=180:$x_5$] (i-8) at (0,-6.75) {};

    \node[input,label=0:$\mathds{1}\{x_2 - x_5 \geq 0\}$] (EQ-1m) at (0.5,-5.25) {};
    \node[and gate US, draw, logic gate inputs=nn, scale=3] (TH3) at (2,-6.375) {};
        \draw (i-7) -| +(0.3,0) |- (TH3.input 1);
        \draw (i-8) -| +(0.3,0) |- (TH3.input 2);
    \node [right=0.1 of TH3.input 1] {$\hphantom{-}1$};
    \node [right=0.1 of TH3.input 2] {$-1$};
    \node [left =0.1 of TH3.output] {$\geq 0$};
    
    \newcommand{\offset}{2.5}
    \node[and gate US, draw, logic gate inputs=nnn, scale=2.5,pin=0:$l(X)$] (DOM) at (\offset+3,-3.75) {};
        \draw (TH1.output) -| +(0.4,0) |- (DOM.input 1);
        \draw (TH2.output) -| +(0.3,0) |- (DOM.input 2);
        \draw (TH3.output) -| +(0.3,0) |- (DOM.input 3);

    \node [right=0.1 of DOM.input 1] {$+2^2$};
    \node [right=0.1 of DOM.input 2] {$-2^1$};
    \node [right=0.1 of DOM.input 3] {$-2^0$};
    \node [left =0.1 of DOM.output] {$\geq 0$};
    \end{tikzpicture}
    }
    \caption{The depth-2 threshold circuit construction of a LDL $l(X)$ with a DOM gate on top. This shows that $l(X) \in \text{LT}\circ\text{LT}$. The signs of the powers of two depends on the labeling of the outputs $z_i$s. If the first layer consists of exact threshold gates, then we have an EDL.}
    \label{fig:ldl_dom}
\end{figure}

More precisely, our main contribution is the explicit constructions of the NN representations of depth-2 circuits under some regularity assumptions where the top gate is a symmetric gate or a DOM gate. 
The circuits under consideration include many important functions in Boolean analysis. We start with the symmetric Boolean function $\text{PARITY}(X) = \bigoplus_{i=1}^n x_i$, which had a lot of attention from the mathematics and information theory for many decades \cite{haastad1986almost,kautz1961realization,paturi1990threshold}. It is known that $\text{PARITY}(X)$ has NN complexity equal to $n+1$ with at most $O(\log{n})$ resolution \cite{hajnal2022nearest,kilic2023information}.


We also analyze the famous INNER-PRODUCT-MOD2 (denoted by IP2) function, $\text{IP2}_{2n}(X,Y) = X^T Y \Mod{2} = \bigoplus_{i=1}^n x_i \land y_i$. The IP2 received treatment in Boolean analysis since it is still an open problem if $\text{IP2} \in \text{LT} \circ \text{LT}$ for polynomially large circuits in $n$ where partial results are known \cite{forster2002linear,hajnal1993threshold}. For the NN representations with unbounded resolution, it is shown that $NN(\text{IP}_{2n}) \geq 2^{n/2}$ \cite{hajnal2022nearest}. Recently, an interesting result for the IP2 is obtained for $k$-NN Representations \cite{dicicco2024nearest}. We give a construction of $\text{IP2}_{2n}$ with $2^n$ many anchors and $O(1)$ resolution using the property $\text{IP2}_{2n} = \text{PARITY} \circ \text{AND}_2 \in \text{SYM} \circ \text{LT}$. This is far from the lower bound but it could be optimal for constant resolution.

Let $\text{EQ}_{2n}$ be the $2n$-input EQUALITY function, which is $\mathds{1}\{X = Y\}$ where $X$ and $Y$ are $n$-bit unsigned integers. Similarly, define $\text{COMP}_{2n}(X,Y) = \mathds{1}\{X \geq Y\}$ for COMPARISON. These functions are important examples of threshold functions since there are many results on their circuit complexity \cite{amano2005complexity,chattopadhyay2018short,roychowdhury1994lower,kilic2021neural,kilic2022algebraic}. We are mainly interested in circuits of these two functions.

The function $\text{OR}_n \circ \text{EQ}_{2n}$ is treated in \cite{chattopadhyay2019lower} to show that for any LDL, the depth of the list must be $2^{\Omega(n)}$ (note that the number of inputs is $2n^2$) using a result by \cite{impagliazzo2010communication}. Another recent result shows that any LDL can be treated as an EDL with a polynomial blow-up in the depth of the list and $\text{LDL} \subsetneq \text{EDL}$ for polynomially large depth in $n$ \cite{dahiya2024linear}. Since $\text{OR}_n \circ \text{EQ}_{2n} \in \text{SYM}\circ\text{ELT}$, we obtain an NN representation with exponentially large number of anchors in $n$ but it is not known if this is tight.

Consider the $m$-input ODD-MAX-BIT function denoted by $\text{OMB}_m$ as the DOM function $\mathds{1}\{\sum_{i=1}^m (-1)^{i-1} 2^{n-i} x_i > 0\}$. A recent result was obtained for $\text{OMB}_m\circ\text{EQ}_{2n} \in \text{DOM}\circ\text{ELT}$, which shows that it is the first explicit example of Boolean functions where any polynomially large size depth-2 linear threshold circuit computing it requires ``large'' weights \cite{chattopadhyay2018short}. The idea is to use the \textit{sign-rank} method from the Communication Complexity Theory. For this function, we give an NN representation with exponentially large number of anchors. 

The organization of the paper is as follows. We first begin examining the NN representations of convex polytopes, namely, $\text{AND}\circ\text{LT}$ circuits. Secondly, we find constructions for the NN representations of $\text{SYM}\circ\text{ELT}$ and $\text{SYM}\circ\text{LT}$. A summary of the results for explicit functions is given in the Table \ref{tab:nn}. Finally, we consider the NN representations of LDLs and EDLs.

\begin{table}[h!]
\floatsetup[table]{style=plaintop}
  \captionsetup[table]{position=top}
\caption{Summary of the Results for NN Representations}
\label{tab:nn}
\centering
\begin{minipage}{\textwidth}
\centering
\begin{adjustbox}{width=.75\textwidth}
\begin{tabular}{|c||c||c|}
    \hline
     \multirow{2}{*}{\textbf{Function}} & \multicolumn{2}{c|}{\textbf{NN Representation}} \\ 
     \cline{2-3} & Size & Resolution \\
     \hline\hline
     $\text{AND}_m \circ \text{EQ}_{2n}$ & $2m + 1$ & $O(n)$ \\
     \hline
     $\text{OR}_m \circ \text{EQ}_{2n}$ & $(m+2)2^{m-1}$ & $O(n)$ \\
     \hline
     $\text{PARITY}_m \circ \text{EQ}_{2n}$ & $3^m$ & $O(n)$ \\
     \hline
     $\text{PARITY}_m \circ \text{COMP}_{2n}$ & $2^m$ & $O(n)$ \\
     \hline
     $\text{IP2}_{2n}$ & $2^n$ &  $O(1)$ \\
     \hline
     $\text{OMB}_m\circ\text{EQ}_{2n}$ & $(m+1)2^m$ & $O(n)$ \\
     \hline
\end{tabular}
\end{adjustbox}
\end{minipage}
\end{table}

\section{NN Representations of Convex Polytopes ($\text{AND}_m \circ \text{LT}_n$)}
\label{sec:polytopes}

The membership function to a convex polytope, i.e., $\mathds{1}\{AX \leq b\}$ can be thought as an $\text{AND} \circ \text{LT}$ circuit where each row is a linear threshold function. We show that $(m+1)$-size NN representation with a single positive anchor is a membership function of a convex polytope with $m$ half-spaces. And, conversely, given a membership function of a convex polytope with $m$ half-spaces, i.e. $\mathds{1}\{AX \leq b\}$, one can construct an $(m+1)$-size NN representation with a single positive anchor.

First direction is easy to see: Given the anchor matrix and anchor labels, we can find optimal separating hyperplanes between them. We provide the proof for the other direction where the geometric idea is given in Figure \ref{fig:polygon}. We use the notation $\diag(AA^T) \in \mathbb{Q}^m$ to denote the squared Euclidean norms of each row of a matrix $A \in \mathbb{Q}^{m\times n}$.


\begin{figure}[h!]
    \centering
    \scalebox{1.2}{
    \begin{tikzpicture}
      \coordinate (A) at (0.75,0.1);
      \coordinate (B) at (2,1);
      \coordinate (C) at (2.4,2);
      \coordinate (D) at (1,2.5);
      \coordinate (E) at (0,1.5);
      \fill[lightgray] (A) -- (B) -- (C) -- (D) -- (E) -- cycle;
      \draw[thick,-] (A) -- (B);
      \draw[thick,-] (B) -- (C);
      \draw[thick,-] (C) -- (D);
      \draw[thick,-] (D) -- (E);
      \draw[thick,-] (E) -- (A);
      \coordinate (a0) at (1.2,1.2);
      \node[above left=0.5 and 0.5 of a0,label=0:$Ax \leq b$] (aa) {};
      \node[right=1.5 of aa,label=0:$Ax > b$] (aa) {};

      \node[mirror=(A)--(B), transform shape, circle, inner sep=1.5pt, minimum width=2pt, fill, color=red, above=0 of a0,label=-270:$\textcolor{red}{a_1}$] (a1) {};
      \node[mirror=(B)--(C), transform shape, circle, inner sep=1.5pt, minimum width=2pt, fill, color=red, above=0 of a0,label=-270:$\textcolor{red}{a_2}$] (a2) {};
      \node[mirror=(C)--(D), transform shape, circle, inner sep=1.5pt, minimum width=2pt, fill, color=red, above=0 of a0,label=0:$\textcolor{red}{a_3}$] (a3) {};
      \node[mirror=(D)--(E), transform shape, circle, inner sep=1.5pt, minimum width=2pt, fill, color=red, above=0 of a0,label=0:$\textcolor{red}{a_4}$] (a4) {};
      \node[mirror=(E)--(A), transform shape, circle, inner sep=1.5pt, minimum width=2pt, fill, color=red, above=0 of a0,label=0:$\textcolor{red}{a_5}$] (a5) {};

      \draw[thick,-,dash pattern=on 2pt off 2pt] (a0) -- (a1);
      \draw[thick,-,dash pattern=on 2pt off 2pt] (a0) -- (a2);
      \draw[thick,-,dash pattern=on 2pt off 2pt] (a0) -- (a3);
      \draw[thick,-,dash pattern=on 2pt off 2pt] (a0) -- (a4);
      \draw[thick,-,dash pattern=on 2pt off 2pt] (a0) -- (a5);

      \filldraw[blue] (a0) circle (2pt) node[above right] {$a_0$};
\end{tikzpicture}
}
    \caption{A convex polytope defined by the intersection of half-spaces $Ax \leq b$ and its NN representation by the ``reflection'' argument. The interior of the polytope is closest to $a_0$ and the exterior is closest to one of $a_1,\dots,a_5$.}
    \label{fig:polygon}
\end{figure}
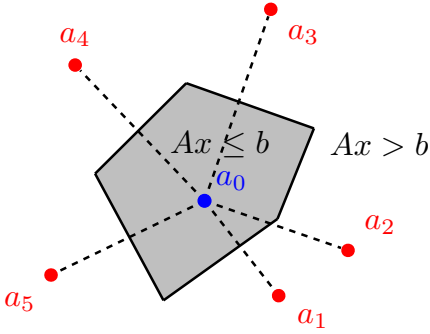

\begin{theorem}
    \label{th:polytope}
    Let $A \in \mathbb{Z}^{m \times n}$ and $b \in \mathbb{Z}^m$ define a convex polytope in $\mathbb{R}^n$ by the intersection of half-spaces $AX \leq b$. Then, there exists an NN representation with $m+1$ anchors and resolution $O(RES(\diag(AA^T))$.
\end{theorem}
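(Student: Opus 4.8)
The plan is to realize the single positive anchor $a_0$ together with the $m$ negative anchors via the \emph{reflection} idea sketched in Figure~\ref{fig:polygon}: place $a_0$ inside the polytope and let the $j$-th negative anchor be the mirror image of $a_0$ across (a slightly shifted copy of) the bounding hyperplane $a_j^T x = b_j$, where $a_j$ denotes the $j$-th row of $A$. The perpendicular bisector of a point and its reflection across a hyperplane is exactly that hyperplane, so each negative anchor carves out precisely the half-space on the far side of one facet. A binary input should then lie closer to $a_0$ than to every negative anchor exactly when it satisfies all $m$ constraints, and otherwise some violated constraint should make the corresponding negative anchor strictly closer.

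Concretely, first I would dispose of degenerate cases, clipping each $b_j$ to the range $|b_j|\le\|a_j\|_1$ (constraints outside this range are either vacuous, in which case they are dropped, or render the function identically $0$ on $\{0,1\}^n$, which is handled trivially), and picking for $a_0$ a binary point $X^\ast$ with $AX^\ast\le b$ whenever the function is not identically $0$; such a point has resolution $O(1)$. The crucial trick is to reflect not across $H_j:a_j^Tx=b_j$ but across the shifted hyperplane $H_j':a_j^Tx=b_j+\tfrac12$, setting
\[
  a_j^{\mathrm{neg}} \;=\; a_0 \;-\; \frac{2\bigl(a_j^T a_0 - b_j - \tfrac12\bigr)}{\|a_j\|^2}\,a_j .
\]
A direct computation gives $d(X,a_0)^2-d(X,a_j^{\mathrm{neg}})^2 = (-4t)\bigl(a_j^TX-b_j-\tfrac12\bigr)$ with $t=(a_j^Ta_0-b_j-\tfrac12)/\|a_j\|^2<0$, so the sign is governed by $a_j^TX-b_j-\tfrac12$. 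Because $a_j^TX$ is an integer, this quantity is never zero and is positive iff $a_j^TX\ge b_j+1$ and negative iff $a_j^TX\le b_j$; the half-integer shift is exactly what converts the non-strict facet inequalities into the strict distance inequalities demanded by the NN-representation definition.

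Correctness then follows immediately: if $AX\le b$ then every term $a_j^TX-b_j-\tfrac12<0$, so $a_0$ is strictly closer than all $a_j^{\mathrm{neg}}$ and $X$ is labeled $1$; if some $a_j^TX\ge b_j+1$ then $a_j^{\mathrm{neg}}$ is strictly closer than $a_0$, so $X$ is labeled $0$, as required. It remains to bound the resolution, which I expect to be the main technical obstacle. Writing each coordinate of $a_j^{\mathrm{neg}}$ over the common integer denominator $\|a_j\|^2=\diag(AA^T)_j$, the numerator is the integer $(a_0)_k\|a_j\|^2-\bigl(2(a_j^Ta_0-b_j)-1\bigr)(a_j)_k$, and the task is to show that both numerator and denominator are polynomial in $\diag(AA^T)_j$. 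Using the clip $|b_j|\le\|a_j\|_1$ together with the Cauchy--Schwarz bound $\|a_j\|_1\le\sqrt{\mathrm{nnz}(a_j)}\,\|a_j\|_2\le\|a_j\|_2^2$ (valid because each nonzero integer entry contributes at least $1$ to $\|a_j\|_2^2$, so $\sqrt{\mathrm{nnz}(a_j)}\le\|a_j\|_2$), and $\|a_j\|_\infty\le\|a_j\|_2\le\|a_j\|_2^2$, the numerator is $O(\|a_j\|_2^4)=O(\diag(AA^T)_j^2)$. Hence each entry has resolution $O(\log\diag(AA^T)_j)=O(RES(\diag(AA^T)))$, and since $a_0$ contributes only $O(1)$, the whole representation meets the claimed bound.
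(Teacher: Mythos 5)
Your proposal is correct and follows essentially the same route as the paper: a single positive anchor at a feasible binary point, with each negative anchor obtained by reflecting it across a (half-integer-shifted) facet hyperplane so that the sign of $a_j^T X - b_j - \tfrac{1}{2}$ decides which anchor is closer, and the resolution bounded through the reflection coefficient $\bigl(a_j^T a_0 - b_j - \tfrac{1}{2}\bigr)/\|a_j\|_2^2$. The only cosmetic difference is that the paper reflects across the unshifted hyperplane $A_i^T x = b_i$ and dismisses the tie case by a separate $b \mapsto b + 0.5$ perturbation remark, whereas you fold the shift into the reflection itself; your resolution accounting is, if anything, slightly more explicit than the paper's.
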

\begin{proof}
    Without loss of generality, we assume that there is an binary interior point in the convex polytope $AX \leq b$. To ensure strict feasibility, we can modify $AX \leq b$ to $AX \leq b + 0.5$ where both inequalities have the same solution sets for $X \in \{0,1\}^n$ and if there is no such binary vector, then $\mathds{1}\{AX \leq b\}$ is in fact a constant Boolean function with value $0$.

    The geometrical idea is to compute the reflection of the feasible point $a_0 \in \{0,1\}^n$ with respect to $Ax = b$. We take $a_0$ a positive anchor and we consider the other anchors negative as follows:
    \begin{align}
        a_i = a_0 + 2c_iA_i
    \end{align}
    for $i \in \{1,\dots,m\}$ where $A_i$ denotes the $i^{th}$ row of the matrix $A$ and $c_i$ is a real constant that will be specified later. We also assume that $a_0 + c_iA_i$ is a point on the $A_i^T x = b_i$ to correctly reflect $a_0$ to $a_i$. Then, we have $A_i^T(a_0 + c_iA_i) = A_i^T a_0 + c_i||A_i||_2^2 = b_i$ and therefore $c_i = \frac{b_i - A_i^T a_0}{||A_i||_2^2}$. Implicitly, all $c_i > 0$ because of the feasibility condition.

    Note that whenever $A_i^T X = b_i$, the anchors $a_i$ and $a_0$ are equidistant and we use a classical perturbation argument: $AX \leq b$ has the same solution set as $AX \leq b + 0.5$ for $X \in \{0,1\}^n$.

    When we expand the squared Euclidean form $d(a_i,X)^2$, we obtain
    \begin{align}
        \nonumber
        d(a_i,X)^2 &= |X| - 2a_0^T X + ||a_0||_2^2 \\ 
        &\hphantom{aaaaaaa}- 4c_i(A_i^T X - A_i^T a_0) +4c_i^2 ||A_i||_2^2 \\ \nonumber
        &= d(a_0,X)^2 + \frac{4}{||A_i||_2^2} (b_i - A_i^T a_0)(A_i^T a_0 - A_i^T X) \\
        &\hphantom{aaaaaaa}+ \frac{4}{||A_i||_2^2} (b_i - A_i^T a_0)^2 \\ 
        &= d(a_0,X)^2 + \frac{4}{||A_i||_2^2} (b_i - A_i^T a_0)(b_i - A_i^T X)
    \end{align}
    We have two cases: either $A_i^T X < b_i$ for all $i \in \{1,\dots,m\}$ or $A_i^T X > b$ for some $i \in \{1,\dots,m\}$. To compare the distance of $X$ to $a_0$ and $a_i$, we simplify the expression to
    \begin{align}
        \label{eq:distance_a_i}
        d(a_i,X)^2 - d(a_0,X)^2 = \frac{4}{||A_i||_2^2} (b_i - A_i^T a_0)(b_i - A_i^T X)
    \end{align}
    
    \textbf{\underline{Case 1:}} If $A_i^T X < b_i$ for all $i \in \{1,\dots,m\}$, we need $d(a_i,X)^2 - d(a_0,X)^2 > 0$. Then, since $b_i - A_i^T a_0 > 0$ by definition, the RHS of Eq. \eqref{eq:distance_a_i} is greater than $0$ if and only if $A_i^T X < b_i$ for all $i \in \{1,\dots,m\}$.

    \textbf{\underline{Case 2:}} If $A_i^T X > b_i$ for some $i \in \{1,\dots,m\}$, we need $d(a_i,X)^2 - d(a_0,X)^2 < 0$ for such $i$. Then, since $b_i - A_i^T a_0 > 0$ by definition, the RHS of Eq. \eqref{eq:distance_a_i} is less than $0$ if and only if $A_i^T X > b_i$.

    For $A_i^T X > b_i$, we do not care which anchor is closest to $x$ among $a_i$ for $i \in \{1,\dots,m\}$ since they all have the same labeling. Therefore, the proposed scheme is indeed an NN representation for the convex polytope defined by $AX \leq b$.

    We now prove the resolution bound. For $c_i = \frac{b_i - A_i^T a_0}{||A_i||_2^2}$, we see that $b_i - A_i^T a_0 \leq 2||A_i||_2^2$ loosely. We assume that the bias term can be at most $||A_i||_2^2$ otherwise the threshold function is trivial. Then, the resolution of the $a_0 + 2c_iA_i$ is $O(RES(\diag(AA^T))$ and the claim holds by considering all Euclidean norms.
\end{proof}



We can also add equality conditions to $AX \leq b$ in that the AND of two linear threshold functions $\mathds{1}\{A_i^T X \leq b_i\}$ and $\mathds{1}\{A_i^T X \geq b_i\}$ implies $\mathds{1}\{A_i^T X = b_i\}$. We can perturb the bias terms by $0.5$ and obtain $\mathds{1}\{A_i^T X = b_i\}$ as the AND of $\mathds{1}\{A_i^T X \leq b_i + 0.5\}$ and $\mathds{1}\{A_i^T X \geq b_i - 0.5\}$ so that there is always an interior point $a_0 \in \{0,1\}^n$ in the polytope. Otherwise, there is no binary interior point and the function is the constant function $f(X) = 0$.

\begin{corollary}
    \label{cor:polytope_eq}
    For a system of linear equations $Ax = b$ where $A \in \mathbb{Z}^{m\times n}$ and $b \in \mathbb{Z}^m$, there exist an NN representation with $2m+1$ anchors and resolution $O(RES(\diag(AA^T))$ checking if $AX = b$ or not for $X \in \{0,1\}^n$.
\end{corollary}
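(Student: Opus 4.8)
The plan is to reduce equality-checking to the convex-polytope membership problem already solved in Theorem \ref{th:polytope}. The key observation is that, over integer-valued inputs, a single linear equation is the conjunction of two linear inequalities: for $X \in \{0,1\}^n$ we have $A_i^T X = b_i$ if and only if both $A_i^T X \leq b_i$ and $A_i^T X \geq b_i$ hold. Stacking these $2m$ inequalities produces the augmented system
\begin{equation}
    A' X \leq b', \quad A' = \begin{bmatrix} A \\ -A \end{bmatrix} \in \mathbb{Z}^{2m \times n}, \quad b' = \begin{bmatrix} b \\ -b \end{bmatrix} \in \mathbb{Z}^{2m},
\end{equation}
whose binary solution set is exactly $\{X \in \{0,1\}^n : AX = b\}$. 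This is precisely a convex polytope with $2m$ half-spaces, so applying Theorem \ref{th:polytope} to $A'X \leq b'$ would yield an NN representation with $2m+1$ anchors: one positive anchor $a_0$ and $2m$ negative reflected anchors.

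The one point that needs care is the strict-feasibility hypothesis of Theorem \ref{th:polytope}, namely the existence of a binary interior point. Here the polytope $A'X \leq b'$ is degenerate, since it lives on the lower-dimensional affine subspace $AX = b$, so I would invoke the same $0.5$-perturbation used in the theorem: replace each pair by $A_i^T X \leq b_i + 0.5$ and $A_i^T X \geq b_i - 0.5$, that is, $|A_i^T X - b_i| \leq 0.5$. Since $A_i^T X$ and $b_i$ are integers, the perturbed system has exactly the same binary solutions as $AX = b$, and any such solution satisfies the perturbed inequalities \emph{strictly}, supplying the binary interior point $a_0 \in \{0,1\}^n$ that Theorem \ref{th:polytope} requires. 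If no binary solution exists, then $\mathds{1}\{AX = b\}$ is the constant $0$ function and is trivially representable.

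Finally, for the resolution bound I would note that the reflection construction of Theorem \ref{th:polytope} depends only on the squared row norms through $\diag(A'A'^T)$. Because the rows of $A'$ are exactly $\pm A_1, \dots, \pm A_m$, and negation leaves the norm unchanged, the vector $\diag(A'A'^T)$ consists of the entries of $\diag(AA^T)$ each repeated twice; hence $RES(\diag(A'A'^T)) = RES(\diag(AA^T))$, and the $\pm 0.5$ shift in the bias does not change the order of magnitude of the resolution. This gives resolution $O(RES(\diag(AA^T)))$, as claimed. The main obstacle is only mild: one must check that row negation preserves the norm-based resolution estimate and that the degeneracy of the equality polytope is absorbed by the perturbation rather than violating the interior-point assumption of Theorem \ref{th:polytope}.
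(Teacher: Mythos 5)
Your proposal is correct and follows essentially the same route as the paper: the paper likewise obtains $\mathds{1}\{A_i^T X = b_i\}$ as the AND of $\mathds{1}\{A_i^T X \leq b_i + 0.5\}$ and $\mathds{1}\{A_i^T X \geq b_i - 0.5\}$, yielding a polytope with $2m$ half-spaces that always contains a binary interior point (or else the function is constant $0$), and then applies Theorem \ref{th:polytope}. Your additional remarks on the degeneracy of the unperturbed polytope and on why row negation preserves $RES(\diag(AA^T))$ are correct but are details the paper leaves implicit.
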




\section{NN Representations of Depth-2 Circuits with Symmetric Top Gate}
\label{sec:sym_lt_elt}

For $\text{AND} \circ \text{LT}$ or $\text{AND} \circ \text{ELT}$, what happens if we replace the AND with the OR gate? For $\text{OR} \circ \text{LT}$, the answer is easy because NN representations are closed under complement operation (as we can revert the labels of anchors) and the complement of $\text{OR} \circ \text{LT}$ is $\text{AND} \circ \text{LT}$, therefore, we already have a solution by Theorem \ref{th:polytope}. However, for $\text{OR} \circ \text{ELT}$, we cannot do the same as the complement of an exact threshold function needs not to be exact.

Obtaining a construction for $\text{OR} \circ \text{ELT}$ is not straight-forward and the arbitrary case is still unresolved. We define the following set of \textit{regularity conditions}. Let $W \in \mathbb{Z}^{m\times n}$ be the weight matrix of the first layer and $b \in \mathbb{Z}^m$ be the bias vector of the first layer.

\begin{enumerate}
    \item The weights of each gate has the same norm. $||W_i||_2^2 = ||W_j||_2^2$ for all $i,j$.
    \item The weights of each gate are mutually orthogonal, $W_i^T W_j = 0$ for all $i \neq j$.
    \item There exists an $X^* \in \{0,1\}^n$ such that $WX^* = b$.
\end{enumerate}

The regularity conditions hurt the generality but the result is still very useful and applicable to many functions. For example, if all gates have \textit{disjoint} inputs and have the same norm for their weights, then all conditions are satisfied.

\begin{theorem}
    \label{th:sym_elt}
    Suppose that there is an $n$-input Boolean function $f(X)$ such that $f(X) \in \textup{SYM}\circ\textup{ELT}$ obeying the regularity conditions with $m$ many gates in the first layer. Let $g(Z)$ be the top symmetric Boolean function where $Z \in \{0,1\}^m$.
    
    There exists an NN representation of $f(X)$ with $\sum_{t \in \mathcal{T}} \binom{m}{m-t}2^{m-t}$ many anchors where $\mathcal{T} = \{I_0+1,I_1+1,\dots,I_{I(g)-1}+1\}$ contains the left interval boundaries for the top symmetric gate $g(Z)$. The resolution of the construction is $O(\log{m} + RES(\diag(WW^T)))$.
\end{theorem}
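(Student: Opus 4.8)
The plan is to generalize the reflection argument of Theorem~\ref{th:polytope} to the orthogonal, equal-norm setting, exploiting the fact that under the regularity conditions the squared Euclidean distance decomposes additively across the $m$ gate directions. First I would fix the interior point $a_0 = X^*$ guaranteed by condition~(3), so that $b_i = W_i^T X^*$, and for any binary input $X$ set $s_i = W_i^T(X - X^*) \in \mathbb{Z}$; then the $i$-th exact gate fires iff $s_i = 0$, and $|Z| = |\{i : s_i = 0\}|$.

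The anchors would all have the form $a = X^* + c\sum_{i \in S} \epsilon_i W_i$ for a support $S \subseteq \{1,\dots,m\}$, signs $\epsilon_i \in \{+1,-1\}$, and a small rational constant $c$ to be fixed. Using orthogonality ($W_i^T W_j = 0$) and the common norm $R = ||W_i||_2^2$, a direct expansion gives $d(a,X)^2 = ||X^* - X||_2^2 + \sum_{i \in S}(Rc^2 - 2c\,\epsilon_i s_i)$, so the distance splits into a term common to every anchor plus a sum of independent per-coordinate costs. Per coordinate the choice ``$i \notin S$'' costs $0$, while ``$i \in S$ with the sign of $s_i$'' costs $Rc^2 - 2c|s_i|$; choosing $c < 2/(Rm)$ makes this negative exactly when $s_i \neq 0$ (as then $|s_i| \geq 1$) and positive when $s_i = 0$. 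Restricting to anchors whose support has size $m - t$ with $t \in \mathcal{T}$, and labelling each such anchor by the value of $g$ on the interval whose left boundary is $t$, yields the stated count $\sum_{t\in\mathcal{T}}\binom{m}{m-t}2^{m-t}$.

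The core of the argument is to show the nearest anchor to $X$ carries label $g(|Z|)$. I would let $V(t)$ denote the minimum of $\sum_{i\in S}(Rc^2 - 2c\epsilon_i s_i)$ over anchors with $|S| = m-t$; optimizing support and signs places the $m-t$ nonzero coordinates at the largest $|s_i|$, so that $V(t) - V(t-1) = 2c|s|_{(m-t+1)} - Rc^2$, where $|s|_{(1)} \geq |s|_{(2)} \geq \cdots$ is the sorted list of magnitudes. This difference is $-Rc^2 < 0$ for $t \leq |Z|$ and strictly positive for $t > |Z|$, so $V$ is unimodal with global minimum at $t = |Z|$. Hence over $\mathcal{T}$ the minimizer is the left boundary $t_j$ of the interval containing $|Z|$, provided $V(t_j) < V(t_{j+1})$ for the next boundary; writing $p = |Z| - t_j$ and $q = t_{j+1} - |Z|$ and using $|s|_{(\cdot)} \geq 1$ on the increasing side gives $V(t_{j+1}) - V(t_j) \geq c\,(2q - (p+q)Rc)$, which is positive for every interval width $p + q \leq m$ precisely because $c < 2/(Rm)$. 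Thus the nearest anchor sits at the correct interval and its label is $g(|Z|) = f(X)$.

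The main obstacle is this last comparison: unimodality alone only pins the global optimum at $|Z|$, and one must rule out the next boundary $t_{j+1}$ winning when $|Z|$ lies at the right end of a wide interval. That worst case ($q = 1$, $p+q = m$) is exactly what forces $c < 2/(Rm)$, and it also drives the resolution: taking $c = 1/(Rm)$, each anchor entry is $X^*_j$ plus $\tfrac{1}{Rm}\sum_i \pm W_{ij}$, an integer of magnitude at most $m\sqrt{R}$ over the denominator $Rm$, giving resolution $O(\log m + RES(\diag(WW^T)))$ as claimed. Finally I would observe that all the inequalities above are strict, so no perturbation of the bias is needed (unlike Theorem~\ref{th:polytope}) and the construction meets the strict-distance requirement of the NN representation definition.
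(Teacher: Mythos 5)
Your proposal is correct and follows essentially the same route as the paper: the same anchors $X^* + c\sum_{i\in S}\pm W_i$ over supports of size $m-t$ for $t\in\mathcal{T}$, the same scaling $c=\frac{1}{m\|w\|_2^2}$, the same orthogonality-based distance decomposition, and the same integrality argument ($|s_i|\geq 1$ when nonzero) to separate interval boundaries. Your packaging of the type comparison as a unimodal cost $V(t)$ minimized at $t=|Z|$ is just a cleaner bookkeeping of the paper's ``maximize the dot product, break ties toward larger $t$'' argument, and your explicit worst-case bound $c < 2q/((p+q)R)$ matches the paper's $0.5(m-t)/m < 0.5$ margin.
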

\begin{proof}
    The anchors we construct are as follows.
    \begin{align}
        a_{jk}^{(t)} = X^* + du_{jk}^{(t)}
    \end{align}
    We will design $u_{jk}^{(t)}$ and $d \in \mathbb{Z}$ will be picked later. $t \in \mathbb{Z}$ is the \textit{type} of the anchor and $j,k$ are indices to be determined later. We denote $w_i$ to be the $i$\textsuperscript{th} row of the weight matrix $W$.

    The squared Euclidean distances will be
    \begin{align}
        \label{eq:squared_dist_u}
        \nonumber
        d(a_{jk}^{(t)},X)^2 &= |X| - 2 X^T X^* + ||X^*||_2^2 \\
        &\hphantom{aaa}-2du_{jk}^{(t)} (X - X^*) + d^2 ||u_{jk}^{(t)}||_2^2
    \end{align}

    Since $|X| - 2X^T X^* + ||X^*||_2^2$ is the same for all anchors, we do not care its value when we compare distances. Now, we pick $u_{jk}^{(t)}$ as the all plus-minus combinations of the $m-t$ combinations of all $m$ weight vectors in $W$ for any $t \in \{0,\dots,m\}$. That is, there are $\binom{m}{m-t}$ selections of weight vectors for each $t$ for $u_{jk}^{(t)}$. We also use an abuse of notation here: $\pm w_1 \pm w_2$ is a compact form to denote all $\{w_1 + w_2, w_1 - w_2, -w_1 + w_2, -w_1 - w_2\}$. We can write all $u_{jk}^{(t)}$s as follows.

    \begin{align}
        \hphantom{aa}&u_{jk}^{(0)} \in \Big\{\pm w_1 \pm w_2 \pm \dots \pm w_{m}\Big\} \\
         \hphantom{aa}&u_{jk}^{(1)} \in \Big\{\pm w_1 \pm \dots \pm w_{m-1},\dots, \pm w_2 \pm \dots \pm w_m \Big\} \\ \nonumber
        &\vdots \\
        \hphantom{aa}&u_{jk}^{(m-1)}  \in \Big\{\pm w_1, \pm w_2, \dots, \pm w_{m}\Big\} \\
        \hphantom{aa}&u_{jk}^{(m)}  \in \emptyset
    \end{align}

    Now, we define $u_{jk}^{(t)}$s more precisely. Let $j_1,\dots,j_{m-t}$ be the binary expansion of $(j-1) \in \mathbb{Z}$ with $\{0,1\}$ entries. The index $j$ denotes the unique sign pattern for $w$s in $u_{jk}^{(t)}$.
    
    For the anchor type $t$, we define the family of index sets $\mathcal{F}^{(t)} = \binom{[m]}{m-t}$. Alternatively, $\mathcal{F}^{(t)} = (\mathcal{I}_k^{(t)})_{k \in K^{(t)}}$. Here $\mathcal{I}_k^{(t)}$ is an index set with size $\binom{m}{m-t}$ from the elements $\{1,\dots,m\}$ and $K^{(t)} = \{1,\dots,\binom{m}{m-t}\}$. In other words, for $u_{jk}^{(t)}$s, the index $k$ denotes a specific $m-t$ selection out of $\{1,\dots,m\}$. We thus obtain
    \begin{align}
        \label{eq:u_def}
        u_{jk}^{(t)} &= \sum_{i = 1}^{m-t} (-1)^{j_i} w_{{\mathcal{I}_k^{(t)}}_i} \text{ for } t < m \\
        u_{jk}^{(m)} &= 0
    \end{align}
    For instance, consider $m = 7$, $t = 4$, $(j - 1) = 2$, and $k=6$. First, we enumerate all $\binom{7}{3}$ combinations in an order to find the correct selection of weights.
    \begin{align}
    \nonumber
        &\Big\{\{w_1,w_2,w_3\}, \{w_1,w_2,w_4\}, \{w_1,w_2,w_5\},\{w_1,w_2,w_6\}, \\
        &\hphantom{\Big\{}\{w_1,w_2,w_7\},\{w_1,w_3,w_4\}, \dots, \{w_5,w_6,w_7\} \Big\}
    \end{align}
    The corresponding index sets are the subscripts of the weights, e.g., $\mathcal{I}_4^{(4)} = \{1,2,6\}$.
    For $k = 6$, we pick $\mathcal{I}_6^{(4)} = \{1,3,4\}$. Since $j-1$ has the binary expansion for $(j_1,j_2,j_3) = (0,1,0)$, we obtain the sign pattern $(1,-1,1)$, implying that we have $u_{36}^{(4)} =w_1-w_3+w_4$.

    Let $g(Z)$ be the symmetric Boolean function for the top gate of the circuit with $I(g)$ many intervals where $Z \in \{0,1\}^m$. In the construction, we only include $u_{jk}^{(t)}$ for $t \in \mathcal{T} = \{I_0+1,I_1+1,\dots,I_{I(g)-1}+1\}$.

    \textbf{\underline{Claim:}} If $Z$ is in the $l^{th}$ interval of $g(Z)$, i.e., $|Z| \in [I_{l-1}+1,I_l]$, then the closest anchor is of type $t = I_{l-1}+1$. That is,
    \begin{align}
        \nonumber
        \label{eq:type_min}
        I_{l-1} + 1 = &\arg\min_{t} -2d{u_{jk}^{(t)}}^T(X-X^*) + d^2||u_{jk}^{(t)}||_2^2 \text{ where } \\
        &w_i^T X = b_i \text{ for } |Z| \text{ many $i$s from } \{0,1,\dots,m\}
    \end{align}

    If the claim is true, then depending on the value $g(Z)$ for that interval, we label all the corresponding type of anchors to \textcolor{red}{\textbf{0}} or \textcolor{blue}{1} and the construction follows.

    To prove the claim, we first pick $d = \frac{1}{m||w||_2^2}$ where $||w||_2^2 = ||w_1||_2^2 = \dots = ||w_m||_2^2$ and by orthogonality, we have $||u_{jk}^{(t)}||_2^2 = (m-t)||w||_2^2$. Then, from Eq. \eqref{eq:squared_dist_u}, we obtain 
    \begin{align}
    \nonumber
        &-2d{u_{jk}^{(t)}}^T(X-X^*) + d^2||u_{jk}^{(t)}||_2^2 \\
        &\hphantom{aaaa}=\frac{-2}{m||w||_2^2}\Big({u_{jk}^{(t)}}^T (X - X^*) - 0.5\frac{(m-t)}{m}\Big) 
    \end{align}

    Suppose that $u_{jk}^{(t)}(X - X^*) > u_{jk}^{(v)} (X - X^*)$ for any $v \in \{0,\dots,m\}$. Then, recalling that $X^* \in \{0,1\}^n$,
    \begin{align}
        &\Big({u_{jk}^{(t)}}^T (X - X^*) - 0.5\frac{(m-t)}{m}\Big) \\
        &\hphantom{aaaa}\geq  \Big({u_{jk}^{(v)}}^T (X - X^*) + 1 - 0.5\frac{(m-t)}{m}\Big) \\
        &\hphantom{aaaaaa} \geq \Big({u_{jk}^{(v)}}^T (X - X^*) + 0.5 \Big) \\
         &\hphantom{aaaaaaa} > \Big({u_{jk}^{(v)}}^T (X - X^*) \Big) \\
         &\hphantom{aaaaaaaaa} > \Big({u_{jk}^{(v)}}^T (X - X^*) - 0.5\frac{(m-v)}{m}\Big)
    \end{align}
    and thus, the $t$ value that minimizes Eq. \eqref{eq:type_min} always have the largest $u_{jk}^{(t)}(X - X^*)$ value.

    If $u_{jk}^{(t)}(X - X^*) = u_{jk}^{(v)} (X - X^*)$ for some $v$, then 
    \begin{align}
        &\Big({u_{jk}^{(t)}}^T (X - X^*) - 0.5\frac{(m-t)}{m}\Big) \\
        &\hphantom{aaaa} > \Big({u_{jk}^{(v)}}^T (X - X^*) - 0.5\frac{(m-v)}{m}\Big)
    \end{align}
    only if $t > v$. Therefore, the $t$ value that minimizes Eq. \eqref{eq:type_min} is the maximum among the $t$ values maximizing $u_{jk}^{(t)}(X - X^*)$.
    
    Expanding $u_{jk}^{(t)}(X-X^*)$ using Eq. \eqref{eq:u_def}, we obtain
    \begin{equation}
        u_{jk}^{(t)}(X - X^*) = \sum_{i = 1}^{m-t} (-1)^{j_i} (w_{{\mathcal{I}_k^{(t)}}_i}^T X - b_{{\mathcal{I}_k^{(t)}}_i})
    \end{equation}
    Given the value of $|Z|$ where $z_i = \mathds{1}\{w_i^T X = b_i\}$ for $i \in \{1,\dots,m\}$, let $\mathcal{I}$ be the index set where $w_i^T X \neq b_i$. It is actually a selection of $m - |Z|$ out of $m$ values and therefore, $\mathcal{I} = \mathcal{I}_k^{(|Z|)}$ for a combination enumerated by some $k$. It can be seen that $u_{jk}^{(t)}(X - X^*)$ is maximized if $\mathcal{I}_k^{(|Z|)} = \mathcal{I}$ and whether $w_i^T X < b_i$ or $w_i^T X > b_i$ is of no importance since there is a pair $j,k$ such that
    \begin{equation}
        \label{eq:max_u}
        \max_{j,k,t} {u_{jk}^{(t)}}^T (X - X^*) = \sum_{i \in \mathcal{I}} |w_i^T X - b_i|
    \end{equation}

    The optimal $t$ value is less than or equal to $|Z|$ because $\mathcal{I}$ contains $m-|Z|$ values. Any superset for the index set $\mathcal{I}$ will include $w_i^T X = b_i$ for some $i$ and the value of the dot product $u_{j'k'}^T(X - X^*)$ cannot increase. Mathematically, for the optimal $j,k,j',k'$ choices,
    \begin{align}
        \nonumber
        &{u_{jk}^{(p)}}^T (X - X^*) = {u_{j'k'}^{(q)}}^T (X-X^*) \\
        &\hphantom{aaaaaa}\text{ for } p,q \in \{0,\dots,I_{l-1}+1\} \\
        \nonumber
        &{u_{jk}^{(p)}}^T (X - X^*) > {u_{j'k'}^{(q)}}^T (X-X^*) \\
        \nonumber
        &\hphantom{aaaaaa}\text{ for } p \in \{0,\dots,I_{l-1}+1\} \text{ and } \\
        &\hphantom{aaaaaa fori} q \in \{I_l + 1,\dots,I_{I(g)-1}+1\}
    \end{align}
    
    By our previous observation about the maximal $t$, we conclude that $t = I_{l-1}+1$. This proves the claim and hence, completes the validation of the construction. The number of anchors is $\sum_{t \in \mathcal{T}} \binom{m}{m-t}2^{m-t}$, which is easy to verify.

    For the resolution, we see that $\frac{1}{m||w||_2^2} (w_1 + \dots + w_m)$. The maximum value for the numerator can be $m||w||_2^2$ and it holds that $RES(A) = O(\log{m} + RES(\diag(WW^T)))$.
\end{proof}

Theorem \ref{th:sym_elt} is powerful since it provides constructions to many important functions in Circuit Complexity Theory. Hence, it is an important milestone to find NN representations with size and resolution upper bounds.

\begin{corollary}
    \label{cor:sym_eq}
    Let $f(X)$ be an $2mn$-input Boolean function $\textup{SYM}_m \circ \textup{EQ}_{2n}$ where there are $m$ many disjoint $n$-input $\textup{EQ}$ functions in the first layer. Then, we obtain the following table of results.
\begin{table}[H]
\floatsetup[table]{style=plaintop}
  \captionsetup[table]{position=top}
\label{tab:sym_eq}
\centering
\begin{tabular}{|c||c||c|}
    \hline
     \multirow{2}{*}{\textbf{Function}} & \multicolumn{2}{c|}{\textbf{NN Representation}} \\
     \cline{2-3} & Size & Resolution \\
     \hline\hline
     $\text{AND}_m \circ \text{EQ}_{2n}$ & $2m + 1$ & $O(n)$ \\
     \hline
     $\text{OR}_m \circ \text{EQ}_{2n}$ & $(m+2)2^{m-1}$ & $O(n)$ \\
     \hline
     $\text{PARITY}_m \circ \text{EQ}_{2n}$ & $3^m$ & $O(n)$ \\
     \hline
\end{tabular}
\end{table}

\end{corollary}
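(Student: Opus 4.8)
The plan is to specialize Theorem \ref{th:sym_elt} to the case in which the bottom layer consists of $m$ disjoint EQUALITY gates, and then evaluate the resulting anchor count $\sum_{t \in \mathcal{T}} \binom{m}{m-t}2^{m-t}$ for each of the three symmetric top gates. First I would record that $\text{EQ}_{2n}(X,Y) = \mathds{1}\{X = Y\}$ is an exact threshold function: writing the two $n$-bit integers in binary, it equals $\mathds{1}\{\sum_{i=1}^n 2^{i-1}(x_i - y_i) = 0\}$, so each gate has a weight vector with entries $\pm 2^{i-1}$ and bias $0$. This identification lets me treat $\text{SYM}_m \circ \text{EQ}_{2n}$ as an instance of $\text{SYM}\circ\text{ELT}$.

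Next I would verify the three regularity conditions so that Theorem \ref{th:sym_elt} applies. Since the $m$ copies act on disjoint blocks of $2n$ variables, their weight vectors have disjoint supports and are therefore mutually orthogonal (condition 2); every copy uses the identical weight pattern, so the norms all agree at $\|w\|_2^2 = 2\sum_{i=0}^{n-1} 4^i = \tfrac{2}{3}(4^n-1)$ (condition 1); and taking $X^* = 0$ makes each block satisfy $X = Y = 0$, hence $WX^* = b = 0$ (condition 3).

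With the hypotheses met, the remaining work is combinatorial: for each top gate I determine its interval boundaries, read off the set $\mathcal{T}$ of left endpoints (recalling $I_0 = -1$), and evaluate the anchor sum. For $\text{PARITY}_m$, parity alternates so every Hamming weight forms its own interval, giving $\mathcal{T} = \{0,1,\dots,m\}$ and, by the binomial theorem, $\sum_{s=0}^m \binom{m}{s}2^s = 3^m$. For $\text{OR}_m$ the two intervals $\{0\}$ and $[1,m]$ yield $\mathcal{T} = \{0,1\}$ and the two-term sum $2^m + m2^{m-1} = (m+2)2^{m-1}$. For the resolution, Theorem \ref{th:sym_elt} gives $O(\log m + RES(\diag(WW^T)))$, and since $\|w\|_2^2 = \Theta(4^n)$ we have $RES(\diag(WW^T)) = O(n)$, so the bound is $O(n)$ in the natural regime where $\log m = O(n)$.

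The one place where the generic recipe is not tight — and the step I would flag as the main subtlety rather than the main difficulty — is $\text{AND}_m$. Applying Theorem \ref{th:sym_elt} directly would give $\mathcal{T} = \{0,m\}$ and $2^m + 1$ anchors, which is wasteful. The key observation is that the AND of $m$ equalities is exactly the single linear system $WX = b$, so instead I would invoke Corollary \ref{cor:polytope_eq}, which produces the sharper $2m+1$ anchors with the same $O(n)$ resolution. Thus the three rows of the table follow from two different earlier results: the OR and PARITY rows from Theorem \ref{th:sym_elt} via the combinatorial evaluation above, and the AND row from the structural collapse to Corollary \ref{cor:polytope_eq}.
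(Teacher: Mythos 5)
Your proposal is correct and follows essentially the same route as the paper: the OR and PARITY rows come from Theorem \ref{th:sym_elt} with $\mathcal{T}=\{0,1\}$ and $\mathcal{T}=\{0,1,\dots,m\}$ respectively (giving $(m+2)2^{m-1}$ and the binomial expansion of $(1+2)^m = 3^m$), while the AND row is obtained from Corollary \ref{cor:polytope_eq} because a direct application of Theorem \ref{th:sym_elt} would only give $2^m+1$ anchors. Your explicit verification of the regularity conditions and of $RES(\diag(WW^T)) = O(n)$ for the $\pm 2^{i-1}$ weights is left implicit in the paper but is exactly the intended justification.
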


First of all, the $\text{AND}_m \circ \text{EQ}_{2n}$ result also follows from Corollary \ref{cor:polytope_eq} and it holds that Theorem \ref{th:sym_elt} is a generalization of it in the case of same norm weights. For $\text{OR} \circ \text{EQ}$, we see that $\mathcal{T} = \{0,1\}$ and for $\text{PARITY} \circ \text{EQ}$, $\mathcal{T} = \{0,1,2,\dots,m\}$ where the representation size formula corresponds to the binomial expansion of $(1+2)^m$. Hence, changing the top gate could greatly change the number of anchors in the NN representation.

We also obtain a construction for $\text{SYM} \circ \text{LT}$. Compared to the Theorem \ref{th:sym_elt}, the difference is in the set of intervals boundaries where Theorem \ref{th:sym_lt} includes both boundaries. In contrast, because each linear threshold function requires $2$ anchors instead of $3$ as for exact threshold functions, we do not see an extra exponential term $2^{m-t}$ in the binomial sum. 

\begin{theorem}
    \label{th:sym_lt}
    Suppose that there is an $n$-input Boolean function $f(X)$ such that $f(X) \in \textup{SYM}\circ\textup{LT}$ obeying the regularity conditions with $m$ many gates in the first layer. Let $g(Z)$ be the top symmetric Boolean function where $Z \in \{0,1\}^m$.
    
    There exists an NN representation of $f(X)$ with $\sum_{t \in \mathcal{T}} \binom{m}{m-t}$ many anchors where $\mathcal{T} = \{I_1,I_1+1,\dots,I_{I(g)-1}, I_{I(g)-1}+1\}$ contains the all the interval boundaries for the top symmetric gate $g(Z)$. The resolution of the construction is $O(\log{m} + RES(\diag(WW^T)))$.
\end{theorem}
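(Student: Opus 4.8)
The plan is to run essentially the same machinery as in Theorem~\ref{th:sym_elt}, but with a single fixed sign per subset of weights instead of all $2^{m-t}$ sign patterns, which is exactly what removes the $2^{m-t}$ factor from the anchor count. I keep the base point $X^*$ furnished by regularity condition~3 and define anchors $a^{(t)}_k = X^* - d\sum_{i \in \mathcal{I}_k^{(t)}} w_i$, where $\mathcal{I}_k^{(t)}$ ranges over all $(m-t)$-subsets of $\{1,\dots,m\}$ and $t$ ranges over $\mathcal{T}$. Writing $v_i = w_i^T X - b_i$, so that gate $i$ is ON iff $v_i \ge 0$ and $|Z|$ equals the number of ON gates, the equal-norm and orthogonality conditions give $\|\sum_{i\in\mathcal{I}_k^{(t)}} w_i\|_2^2 = (m-t)\|w\|_2^2$, and with $d = 1/(m\|w\|_2^2)$ the distance comparison reduces, exactly as in the exact case, to maximizing $h(t) := \max_k \sum_{i\in\mathcal{I}_k^{(t)}}(-v_i) - \tfrac{m-t}{2m}$ over the admissible $t$.

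The crucial difference lies in the shape of this objective. For each fixed $t$ the best subset of size $m-t$ simply collects the $m-t$ largest values of $-v_i$; since a single (negative) sign is used, OFF gates contribute $-v_i \ge 1$ while ON gates (including any sitting exactly on a hyperplane, where $v_i = 0$) contribute $-v_i \le 0$. Because the marginal penalty is only $1/(2m) < 1$, the increment $(-v)_{(s)} - \tfrac{1}{2m}$ obtained by growing the subset to size $s$ is strictly positive while we are still adding OFF gates and strictly negative as soon as we begin adding ON gates. Hence, as a function of $t$, $h$ is \emph{strictly unimodal} with a unique peak at $t = |Z|$ --- in contrast to Theorem~\ref{th:sym_elt}, where the objective was flat for every $t \le |Z|$. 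This is also where the at-threshold gates are handled for free: the $1/(2m)$ penalty makes it never worthwhile to include a gate contributing $0$, so such gates are automatically counted as ON and no bias perturbation is needed.

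With strict unimodality established, the role of $\mathcal{T}$ becomes transparent, and I expect this to be the step needing the most care. For $|Z|$ in the $l$-th interval $[I_{l-1}+1, I_l]$ the peak sits inside that interval, and because $\mathcal{T}$ contains no index strictly between the two endpoints of an interval, the only candidate minimizers are $I_{l-1}+1$ and $I_l$; unimodality together with $I_{l-1} < I_{l-1}+1 \le |Z| \le I_l < I_l+1$ yields $h(I_{l-1}+1) \ge h(I_{l-1})$ and $h(I_l) \ge h(I_l+1)$, so no type lying outside the interval can win. This is precisely why \emph{both} boundaries of every internal interval must appear in $\mathcal{T}$: with the flat plateau of the exact case gone, a nearby wrong-interval type such as $I_l+1$ would otherwise beat the distant endpoint $I_{l-1}+1$. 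The first and last intervals are one-sided and rely on the single boundary $I_1$, respectively $I_{I(g)-1}+1$, present in $\mathcal{T}$. Since both $I_{l-1}+1$ and $I_l$ belong to interval $l$, they receive the same label that $g$ assigns to that interval, so the representation is correct regardless of which of the two is closest.

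Finally I would read off the size as $\sum_{t\in\mathcal{T}}\binom{m}{m-t}$, one anchor per subset per admissible type, and bound the resolution: each coordinate equals $(X^*)_j - \tfrac{1}{m\|w\|_2^2}\sum_{i\in\mathcal{I}_k^{(t)}}(w_i)_j$, whose numerator and denominator are controlled by $m$ and by the common squared norm $\|w\|_2^2 = \diag(WW^T)$, giving resolution $O(\log m + RES(\diag(WW^T)))$.
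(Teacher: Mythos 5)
Your proposal is correct and follows essentially the same route as the paper's proof: identical anchors $X^* + d\,u_k^{(t)}$ with all signs fixed to $-1$, the same choice $d = 1/(m\|w\|_2^2)$, reduction of the distance comparison to maximizing $\max_k \sum_{i\in\mathcal{I}_k^{(t)}}(-v_i) - \tfrac{m-t}{2m}$, and the same unimodality-in-$t$ argument showing the winning type is one of the two (identically labeled) endpoints $I_{l-1}+1$ or $I_l$ of the interval containing $|Z|$. The only divergence is cosmetic: the paper removes at-threshold inputs by perturbing $b$ to $b-0.5$ and shifting $X^*$ by $-\tfrac{1}{2\|w\|_2^2}(w_1+\dots+w_m)$, whereas you observe that the $1/(2m)$ marginal penalty already makes gates with $w_i^TX=b_i$ behave as ON gates, which is a valid and slightly cleaner way to dispose of the same technicality.
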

\begin{proof}
    The proof is essentially the same as the proof of Theorem \ref{th:sym_elt} except we do not have $j$ parameter in the anchor construction. All sign patterns are fixed to $-1$s.
    Then, we have
    \begin{align}
        a_{k}^{(t)} &= X^* + du_{k}^{(t)} \\
        u_{k}^{(t)} &= \sum_{i=1}^{m-t} -w_{{\mathcal{I}_k^{(t)}}_i} \text{ for t < m} \\
        u_{k}^{(m)} &= 0
    \end{align}
    for the same definition of the family of index sets. We again pick $d = \frac{1}{m||w||_2^2}$. 
    
    We also assume that the bias vector $b$ is replaced with $b-0.5$ to ensure the equality cases do not appear for the linear threshold functions in the first layer. In this case, we need to be careful about the choice of $X^*$ since there is no $WX = b - 0.5$ such that $X \in \{0,1\}^n$. We remedy this by using the first two regularity conditions so that $X^*$ in the third regularity condition is replaced by $X^* - \frac{1}{2||w||_2^2}(w_1 + \dots + w_m)$ where $||w||_2^2$ is the norm of a row of $W$. This does not increase the resolution asymptotically at the end since $RES(X^*) = RES(\diag(WW^T)$.
    
    Given the value of $|Z|$ where $z_i = \mathds{1}\{w_i^T X > b_i\}$ for $i \in \{1,\dots,m\}$ and $|Z| \in [I_{l-1}+1,I_{l}]$, let $\mathcal{I}$ be the index set $w_i^T X < b_i$. It is actually a selection of $m - |Z|$ out of $m$ values and $u_k^{(t)}(X-X^*)$ is maximized if $\mathcal{I}_k^{(|Z|)} = \mathcal{I}$.
    \begin{equation}
        \max_{k,t} {u_{k}^{(t)}}^T (X - X^*) = \sum_{i \in \mathcal{I}} (b_i - w_i^T X)
    \end{equation}
    For the optimal $k,k'$ choices, we have
    \begin{align}
        \nonumber
        &{u_{k}^{(p)}}^T (X - X^*) < {u_{k'}^{(p+1)}}^T (X-X^*) \\
        \label{eq:lower_part}
        &\hphantom{aaaaaa}\text{ for } p \in \{0,\dots,|Z|-1\} \\
        \nonumber
        &{u_{k}^{(p)}}^T (X - X^*) > {u_{k'}^{(p+1)}}^T (X-X^*) \\
        \label{eq:upper_part}
        &\hphantom{aaaaaa}\text{ for } p \in \{|Z|,\dots,m\}
    \end{align}
    Since $|Z| \in [I_{l-1}+1,I_l]$, the optimal $t$ value will either be $I_{l-1}+1$ or $I_l$. Since we include both interval boundaries in this construction and both types $t = I_{l-1}+1$ and $t = I_l$ have the same label, the construction follows. The resolution is similar to the previous proof and the increase in the resolution of $X^*$ makes no difference asymptotically.
\end{proof}


\begin{corollary}
    \label{cor:sym_lt}
    Let $f(X)$ be an $mn$-input Boolean function $\textup{SYM}_m \circ \textup{LT}_{n}$ where there are $m$ many disjoint $n$-input $\textup{LT}$ functions in the first layer with same norm weight vectors. Then, we obtain the following table of results.
\begin{table}[H]
\floatsetup[table]{style=plaintop}
  \captionsetup[table]{position=top}
\label{tab:sym_lt}
\centering
\begin{tabular}{|c||c||c|}
    \hline
     \multirow{2}{*}{\textbf{Function}} & \multicolumn{2}{c|}{\textbf{NN Representation}} \\
     \cline{2-3} & Size & Resolution \\
     \hline\hline
     $\text{AND}_m \circ \text{LT}_{n}$ & $m+1$ & $O(n\log{n})$ \\
     \hline
     $\text{OR}_m \circ \text{LT}_{n}$ & $m+1$ & $O(n\log{n})$ \\
     \hline
     $\text{PARITY}_m \circ \text{LT}_{n}$ & $2^m$ & $O(n\log{n})$ \\
     \hline
\end{tabular}
\end{table}
\end{corollary}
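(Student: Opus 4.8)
The plan is to specialize Theorem~\ref{th:sym_lt} to the three top gates $\text{AND}_m$, $\text{OR}_m$, and $\text{PARITY}_m$, and then to bound the resolution contributed by $RES(\diag(WW^T))$ for $n$-input linear threshold functions. First I would verify the regularity conditions: because the $m$ gates read disjoint blocks of the $mn$ inputs, their weight vectors live on disjoint coordinates and are therefore automatically orthogonal (condition~2); equal norm (condition~1) is granted by hypothesis; and condition~3 is dispatched exactly as inside the proof of Theorem~\ref{th:sym_lt}, by replacing $b$ with $b-0.5$ and shifting the base point $X^*$ by $\frac{1}{2||w||_2^2}(w_1+\dots+w_m)$ so that no equality case occurs. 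With the hypotheses met, the anchor set of Theorem~\ref{th:sym_lt} applies verbatim.

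Next I would read off the interval structure of each top gate and evaluate $\sum_{t\in\mathcal{T}}\binom{m}{m-t}$. For $\text{AND}_m$ the only intervals are $[0,m-1]$ (output $0$) and $[m,m]$ (output $1$), so the internal boundary is $I_1=m-1$ and $\mathcal{T}=\{m-1,m\}$, giving $\binom{m}{1}+\binom{m}{0}=m+1$. For $\text{OR}_m$ the intervals are $[0,0]$ and $[1,m]$, so $I_1=0$ and $\mathcal{T}=\{0,1\}$, giving $\binom{m}{m}+\binom{m}{m-1}=m+1$. For $\text{PARITY}_m$ every singleton $[i,i]$ is its own interval, so $I_j=j-1$ and the boundaries $\{I_1,I_1+1,\dots,I_m,I_m+1\}$ exhaust the set $\{0,1,\dots,m\}$; the count then telescopes to $\sum_{t=0}^{m}\binom{m}{m-t}=2^m$ by the binomial theorem. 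These three evaluations produce precisely the Size column of the table.

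For the resolution I would invoke the classical fact that every $n$-input linear threshold function is realizable with integer weights of magnitude $2^{O(n\log n)}$, i.e. with $O(n\log n)$ bits each. Consequently each squared norm $||W_i||_2^2=\sum_j w_{ij}^2$ has magnitude at most $n\cdot 2^{O(n\log n)}=2^{O(n\log n)}$, so $RES(\diag(WW^T))=O(n\log n)$. Substituting into the bound $O(\log m+RES(\diag(WW^T)))$ of Theorem~\ref{th:sym_lt} and absorbing the $\log m$ term (which is dominated under the standing assumption $\log m=O(n\log n)$) yields resolution $O(n\log n)$, matching the table.

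The main obstacle I anticipate is the equal-norm hypothesis. If one wants the corollary for arbitrary disjoint threshold gates that are not pre-normalized, the norms must be equalized without inflating resolution before Theorem~\ref{th:sym_lt} can be applied. The plan here is to pad each gate with a constant number of dummy inputs clamped to $0$ and choose their integer weights via Lagrange's four-square theorem so that every $||W_i||_2^2$ is raised to the common maximum $\max_i ||W_i||_2^2$. Since the dummy inputs never fire, the computed Boolean function is unchanged, only $O(1)$ coordinates are appended, and the added weights are bounded by $2^{O(n\log n)}$, so the $O(n\log n)$ resolution bound is preserved throughout.
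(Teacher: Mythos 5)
Your proposal is correct and follows exactly the route the paper intends: the corollary is a direct specialization of Theorem~\ref{th:sym_lt}, with the disjoint-support and equal-norm hypotheses supplying the regularity conditions, the interval boundaries of $\text{AND}_m$, $\text{OR}_m$, and $\text{PARITY}_m$ giving $\mathcal{T}=\{m-1,m\}$, $\{0,1\}$, and $\{0,\dots,m\}$ respectively (hence sizes $m+1$, $m+1$, $2^m$), and the classical $2^{O(n\log n)}$ integer-weight bound for $n$-input threshold functions giving $RES(\diag(WW^T))=O(n\log n)$. The closing remark on equalizing norms is extra (the corollary already assumes equal norms) but harmless.
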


It is remarkable that for the $\text{IP2}_{2n}$ function, Corollary \ref{cor:sym_lt} provides a construction with $2^n$ anchors and constant resolution since $\text{IP2}_{2n} = \text{PARITY}_n \circ \text{AND}_2 \in \text{SYM} \circ \text{LT}$. This is far from the lower bound where there is no explicit resolution bound. For constant resolution, the construction we provide could be optimal. We also see that  $\text{AND}_m \circ \text{LT}_{n}$ and $\text{OR}_m \circ \text{LT}_{n}$ have the same size representations.

\section{NN Representations of LDLs and EDLs}
\label{sec:ldl_edl}

The main characteristic of decision lists is the \textit{domination principle} where the threshold gates in the higher location will determine the output independent of what the lower level threshold gate outputs are. We design the location of the anchors based on this observation. The geometric approach to find an NN representation for LDLs is shown in Figure \ref{fig:ldl_construction}. 

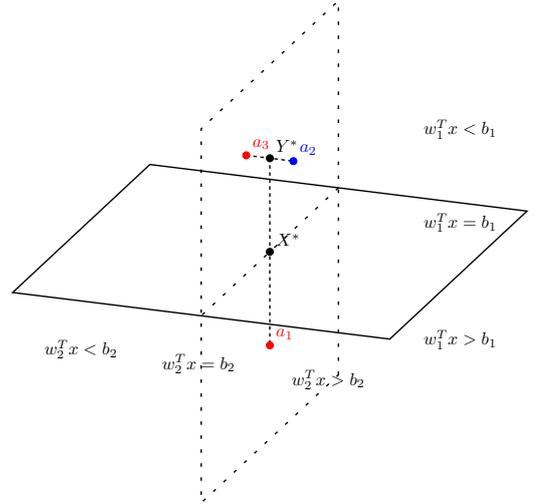
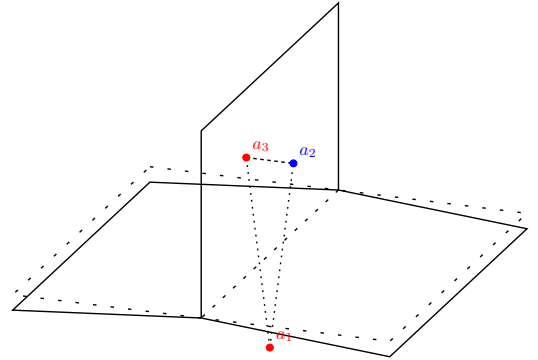
\begin{figure}[h!]
    \centering
    \begin{subfigure}{\textwidth}
    \centering
    \scalebox{0.66}{
        \tdplotsetmaincoords{70}{110}
        \begin{tikzpicture}[tdplot_main_coords]
            \newcommand{\eq}{=}
            \draw[thick,-,dash pattern=on 2pt off 6pt] (-4,0,-4) -- (-4,0,4) -- (4,0,4) -- (4,0,-4) -- cycle;
            \draw[thick,-] (-4,-4,0) -- (-4,4,0) -- (4,4,0) -- (4,-4,0) -- cycle;
            

            \draw[thick,-,dash pattern=on 2pt off 6pt](-4,0,0)--(4,0,0);
            
            \coordinate (a0) at (0,0,0);

            \node[label=0:$w_1^T x < b_1$] (w1) at ($(a0) + (0,3,3)$) {};
            \node[label=0:$w_1^T x \eq b_1$] (w2) at ($(a0) + (0,3,1)$) {};
            \node[label=0:$w_1^T x > b_1$] (w3) at ($(a0) + (0,3,-1.5)$) {};

            \node[label=0:$w_2^T x > b_2$] (w4) at ($(a0) + (7,2.75,0)$) {};
            \node[label=0:$w_2^T x \eq b_2$] (w5) at ($(a0) + (7,0,0)$) {};
            \node[label=0:$w_2^T x < b_2$] (w6) at ($(a0) + (7,-2.5,0)$) {};
            
            \coordinate (X) at ($(a0)$);

            \filldraw[black] (X) circle (2pt) node[above right]  {$X^*$};

            
            \coordinate (a21) at ($(a0) + (0,0,2)$);
            
            \coordinate (a22) at ($(a0) + (0,0,-2)$);
            \draw[thick,-,dash pattern=on 2pt off 2pt](a21)--(a22);

            \coordinate (a11) at ($(a21) + (0,0.5,0)$);
            
            \coordinate (a12) at ($(a21) + (0,-0.5,0)$);
            
            \draw[thick,-,dash pattern=on 2pt off 2pt](a12)--(a11);
            
            \filldraw[blue] (a11) circle (2pt) node[above right]  {$a_2$};
            
            \filldraw[red] (a12) circle (2pt) node[above right]  {$a_3$};
            
            \filldraw[black] (a21) circle (2pt) node[above right]  {$Y^*$};
            
            \filldraw[red] (a22) circle (2pt) node[above right]  {$a_1$};
        \end{tikzpicture}
    }
    \caption{Anchor placement idea for the NN Representation for an LDL of depth $2$. Each anchor takes care of a leaf of the LDL.}
    \label{sub:ldl_1}
    \end{subfigure}
    
    \begin{subfigure}{\textwidth}
    \centering
    \scalebox{0.66}{
        \tdplotsetmaincoords{70}{110}
        \begin{tikzpicture}[tdplot_main_coords]
            \newcommand{\eq}{=}
            \draw[thick,-] (-4,0,0) -- (-4,0,4) -- (4,0,4) -- (4,0,0);
            \draw[thick,-,dash pattern=on 2pt off 8 pt] (-4,-4,0) -- (-4,4,0) -- (4,4,0) -- (4,-4,0) -- cycle;
            
            \draw[thick,-] (4,0,0) -- (4,-4,-0.33) -- (-4,-4,-0.33) -- (-4,0,0);

            \draw[thick,-] (-4,0,0) -- (-4,4,-0.33) -- (4,4,-0.33) -- (4,0,0);

            \draw[thick,-,dash pattern=on 2pt off 6 pt](-4,0,0)--(4,0,0);
            
            \coordinate (a0) at (0,0,0);
            
            \coordinate (X) at ($(a0)$);


            \coordinate (a21) at ($(a0) + (0,0,2)$);
            
            \coordinate (a22) at ($(a0) + (0,0,-2)$);

            \coordinate (a11) at ($(a21) + (0,0.5,0)$);
            
            \coordinate (a12) at ($(a21) + (0,-0.5,0)$);
            
            \draw[thick,-,dash pattern=on 2pt off 2pt](a12)--(a11);

            \draw[thick,-,dash pattern=on 1pt off 3pt](a22)--(a11);

            \draw[thick,-,dash pattern=on 1pt off 3pt](a22)--(a12);
            
            \filldraw[blue] (a11) circle (2pt) node[above right]  {$a_2$};
            
            \filldraw[red] (a12) circle (2pt) node[above right]  {$a_3$};
            
            \filldraw[red] (a22) circle (2pt) node[above right]  {$a_1$};
        \end{tikzpicture}
    }
    \caption{The approximate decision regions for the NN representation. The closer $a_2$ and $a_3$ are to each other, the better the bottom region will approximate a half-space.}
    \label{sub:ldl_2}
    \end{subfigure}
    
    \caption{Anchor placement idea for the NN Representation for an LDL of depth $2$. In this example, the labels of the anchors are arbitrary.}
    \label{fig:ldl_construction}
\end{figure}

\begin{theorem}
    \label{th:ldl}
    Suppose that an $n$-input Linear Decision List $l(X)$ of depth $m$ is given under regularity conditions with a weight matrix $W \in \mathbb{R}^{m\times n}$. Then, there is an NN representation for $l(X)$ with $m+1$ anchors and resolution $O(mRES(\diag(WW^T)))$.
\end{theorem}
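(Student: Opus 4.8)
The plan is to reduce the problem to a nearest-neighbor instance in the $m$-dimensional subspace spanned by the weight rows $w_1,\dots,w_m$, exactly as in the proofs of Theorems \ref{th:sym_elt} and \ref{th:sym_lt}. Writing each anchor as $a_k = X^* + \sum_{i=1}^m (\gamma_{ki}/\|w\|_2^2)\, w_i$ and invoking the orthogonality and equal-norm regularity conditions, the squared distance $d(a_k,X)^2$ splits into a term identical for every anchor (the component of $X-X^*$ orthogonal to $\mathrm{span}(w_i)$, together with $|X|-2X^TX^*+\|X^*\|_2^2$) and a term governed only by the $m$ quantities $\sigma_i := w_i^T(X-X^*)$. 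Since $w_i^T X^* = b_i$, the sign of $\sigma_i$ records whether the $i$-th threshold gate fires, so the leaf of the list reached by $X$ is determined by the sign pattern of $(\sigma_1,\dots,\sigma_m)$. As in Theorem \ref{th:sym_lt}, I first perturb $b \mapsto b-\tfrac12$ (shifting $X^*$ accordingly) so that $|\sigma_i|\ge \tfrac12$ always, which removes the tie cases from the distance comparisons.

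Next I would realize the domination principle geometrically by placing the $m+1$ anchors at a single common radius from $X^*$, with geometrically separated scales across the coordinates. Concretely, set $M_i = R^{m-i}$ for a parameter $R$ to be fixed, and take $\gamma_{ki} = +M_k$ if $i=k$ and $\gamma_{ki}=-M_i$ otherwise for $k\le m$, while $\gamma_{m+1,i}=-M_i$ for all $i$; anchor $a_k$ is labeled with the list output $z_k$. The crucial feature of this choice is that every anchor has the same norm $\sum_i M_i^2$, so in any pairwise comparison the quadratic terms cancel and $d(a_k,X)^2-d(a_l,X)^2$ becomes a purely linear expression in the $\sigma_i$. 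Moreover $a_k$ and $a_l$ differ only in coordinates $k$ and $l$, which collapses the comparison to $d(a_k,X)^2 - d(a_l,X)^2 \propto -(\sigma_k M_k - \sigma_l M_l)$ (and to the single coordinate $k$ when comparing against $a_{m+1}$).

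The correctness proof then reduces to a short case analysis. For $X$ in the $k$-th leaf region ($\sigma_1,\dots,\sigma_{k-1}<0$, $\sigma_k>0$, the rest free) and any competitor $a_l$: if $l<k$ then $\sigma_k M_k>0>\sigma_l M_l$ so $a_k$ wins outright; if $l>k$ then $M_k/M_l \ge R$ dominates the ratio $\sigma_l/\sigma_k$, and choosing $R>2\max_{i,X}|\sigma_i|$ guarantees $\sigma_k M_k>\sigma_l M_l$; the comparison with $a_{m+1}$ involves only coordinate $k$ and is immediate. A symmetric computation shows $a_{m+1}$ wins precisely when every $\sigma_i<0$. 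I expect the main obstacle to be exactly this domination bound: one must ensure that a single geometric ratio $R$ enforces the correct winner simultaneously at all $m$ levels of the list, which forces $R$ to exceed the largest possible magnitude $\max_i|w_i^TX-b_i|$ and hence $M_1=R^{m-1}$ to carry $m$ factors of that magnitude.

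Finally, the resolution follows from the size of the largest coordinate: each entry of $a_k$ equals the corresponding entry of $X^*$ plus $\|w\|_2^{-2}\sum_i \gamma_{ki} w_{ij}$, whose numerator is $O(M_1\cdot\mathrm{poly}(\|w\|_2))$. Since $\log R = O(RES(\diag(WW^T)))$ and $\log M_1 = (m-1)\log R$, the resolution is $O(m\,RES(\diag(WW^T)))$, matching the claim. The only remaining subtlety is bounding $\max_i|w_i^TX-b_i|$ by a quantity whose logarithm is $O(RES(\diag(WW^T)))$, which follows from $|w_i^TX-b_i| = O(\|w_i\|_1+|b_i|)$ together with the standing assumption that the bias of a nontrivial gate is at most its weight norm.
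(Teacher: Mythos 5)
Your construction is correct, and it follows the paper's high-level strategy (one anchor per leaf, placed at $X^*$ plus a signed combination of the mutually orthogonal rows $w_i$ with geometrically graded coefficients so that earlier gates dominate later ones), but the detailed placement is genuinely different and arguably cleaner. The paper takes $a_i = X^* - \sum_{j<i}c_j w_j + c_i w_i$ with $c_j = (2\|w\|_2^2)^{-j}$, so anchor $a_i$ only involves $w_1,\dots,w_i$ and the anchors have \emph{different} norms; the verification must then carry the quadratic correction terms $\|w\|_2^2\sum_j c_j^2$ through the comparison and dispose of them with a geometric-series bound. Your choice $\gamma_{ki}=+M_k$ at $i=k$ and $-M_i$ elsewhere puts all $m+1$ anchors on a common sphere around $X^*$, so the quadratic terms cancel identically and each pairwise comparison collapses to the two-term linear inequality $M_k\sigma_k > M_l\sigma_l$ (and to $\sigma_k>0$ against $a_{m+1}$); the only quantitative requirement is the single condition $R > 2\max_{i,X}|\sigma_i|$, which your final paragraph correctly reduces to $\log R = O(RES(\diag(WW^T)))$ via $|w_i^TX-b_i| = O(\|w_i\|_1+|b_i|)$ and the standing bound on biases. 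Both approaches pay the same resolution price, $\log M_1 = (m-1)\log R$ versus the paper's $\log c_m^{-1} = m\log(2\|w\|_2^2)$, giving $O(m\,RES(\diag(WW^T)))$ either way; what your version buys is the elimination of the geometric-series bookkeeping in the correctness argument, at the cost of every anchor touching all $m$ weight vectors rather than a prefix. One small point worth making explicit if you write this up: after the shift $b\mapsto b-\tfrac12$ the quantities $\sigma_i$ are half-integers, so $|\sigma_i|\ge\tfrac12$ and no ties occur, which is exactly the role the perturbation plays in the paper's proof as well.
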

\begin{proof}
    For linear threshold gates, the third regularity condition contradicts the desirable property that there should be no $X \in \{0,1\}^n$ on the hyperplanes themselves. We will describe a way to solve this issue by modifying this condition as it is done in the proof of Theorem \ref{th:sym_lt}.
    
    We previously perturbed the bias terms by $0.5$ to ensure that there is no binary vectors on the hyperplanes. Alternatively, this can be done by seeing that $\mathds{1}\{w^T X \geq b\}$ is equivalent to $\mathds{1}\{2w^T X \geq 2b - 1\}$. Although both methods are equivalent, this change makes the steps in the proof cleaner.
    
    Given that there is a binary vector $X^* \in \{0,1\}^n$ such that $WX^* = b$, we can find $X' = X^* - \frac{1}{2||w||_2^2}(w_1+ \dots + w_m)$ so that $2WX' = 2b - 1$ where $||w||_2^2$ is the norm of the original weights. In this case, the resolution of the weights increases by $1$ bit because of the doubling and the resolution of $X^*$ is at most $O(RES(||w||_2^2)$.
    
    Therefore, without loss of generality, we can modify the third regularity condition in the following way: There exists $X^* \in \mathbb{Q}^n$ such that $WX^* = b$ with ``sufficiently'' small resolution and there is no binary $X \in \{0,1\}^n$ such that $WX = b$. 
    
    To imitate the linear decision list and obtain the domination principle, we construct the anchors as follows where $c_i = \Big(\frac{1}{2||w||_2^2}\Big)^i$.
    
    \begin{align}
        a_i &= X^* - \sum_{j < i} c_jw_j + c_iw_i \text{ for } i = 1,\dots,m \\
        a_{m+1} &= X^* - \sum_{j < m} c_jw_j - c_mw_m
    \end{align}
    The labeling of $a_i$ directly corresponds to the labels of $z_i$ for the decision list. We claim that for any $k$, if the location of the leading one in the decision list is $k$, i.e., $(w_1^T X < b_1,\dots,w_{k-1}^T X < b_{k-1}, w_k^T X > b_k, \times,\dots,\times)$ with $\times$ being \textit{don't cares}, then $a_k$ is the closest to the input vector. Hence, the following two conditions are necessary and sufficient. Roughly, the first condition states that if the output is $1$, $a_k$ dominates all the rest and the second condition states that if the output is $0$, one should proceed to the next anchor.
    
    \begin{align}
        \label{eq:ldl_cond_1}
        &w_k^T X > b_k \Rightarrow d(a_k,X)^2 - d(a_l,X)^2 < 0 \hphantom{a}\forall k < l \\
        \nonumber
        \label{eq:ldl_cond_2}
        &w_k^T X < b_k \Rightarrow d(a_k,X)^2 - d(a_{k+1},X)^2 > 0 \hphantom{a} \\
        &\hphantom{aaaaaaaaaaaaaaaaaaaaa}\forall k \in \{1,\dots,m\}
    \end{align}
    
    Using the orthogonality of $w_i$s, the squared Euclidean distances can be written as
    \begin{align}
        \nonumber
        d(a_i,X)^2 &= |X| - 2X^TX^* + ||X^*||_2^2 \\
        \nonumber
        &\hphantom{aa}+2\sum_{j < i}c_j(w_j^T X - b_j) + ||w||_2^2\sum_{j<i} c_j^2 \\
        \label{eq:ldl_distance}
        &\hphantom{aa}- 2c_i(w_i^T X - b_i) + c_i^2||w||_2^2
    \end{align}
    For the condition in Eq. \eqref{eq:ldl_cond_1}, we obtain the following.
    \begin{align}
        \nonumber
        &d(a_k,X)^2 - d(a_l,X)^2 = -4c_k(w_k^T X - b_k) \\
        \nonumber
        &\hphantom{aaa}- 2\sum_{j=k+1}^{l-1} c_j(w_j^T X - b_j) \\
        \label{eq:ldl_sq_difference}
        &\hphantom{aaa}+ 2c_l(w_l^T X - b_l) - ||w||_2^2 \sum_{j=k+1}^l c_j^2 < 0
    \end{align}
    We optimize this inequality in an adversarial sense where the contribution of the negative terms are smallest and of the positive terms are largest possible. Note that we bound $|w_j^T (X - X^*)| \leq 2||w||_2^2$ loosely. We see that $w_k^T X - b_k = 1$ and $w_j^T X - b_j = -2||w||_2^2$ for $j > k$ gives the tightest bounds. $j = m$ or $j=m+1$ does not matter. Then, putting the values of $c$s, we get
    \begin{align}
        \nonumber
        &d(a_k,X)^2 - d(a_l,X)^2 = -\frac{4}{2^k||w||_2^{2k}} \\
        \nonumber
        &\hphantom{aaaaaaaa}+\sum_{j=k+1}^{l} \frac{2}{(2||w||_2^2)^{j-1}} \\
        \label{eq:ldl_k_l}
        &\hphantom{aaaaaaaa}-\sum_{j=k+1}^l \frac{1}{2^{2j} ||w||_2^{4j+2}} < 0
    \end{align}
    The first term dominates the second term for any finite value of $l$ using a geometric series argument. For $l \rightarrow \infty$,
    \begin{align}
        &-\frac{4}{(2||w||_2^2)^k} + \frac{2}{(2||w||_2^2)^k} \frac{1}{1-\frac{1}{2||w||_2^2}} \\
        &\hphantom{aaaa}= -\frac{4}{(2||w||_2^2)^k} + \frac{2}{(2||w||_2^2)^k} \frac{2||w||_2^2}{2||w||_2^2-1} \leq 0
    \end{align}
    The fraction $\frac{2||w||_2^2}{2||w||_2^2-1}$ is at most $2$ for $||w||_2^2 = 1$ and the expression is strictly negative for $||w||_2^2 > 1$. Due to the third negative term in Eq. \eqref{eq:ldl_k_l}, the claim is true.

    The proof for the second condition (Eq. \eqref{eq:ldl_cond_2}) is similar. We first compute $d(a_k,X)^2 - d(a_{k+1},X)^2$ and consider $k < m$.
    \begin{align}
        \nonumber
        &d(a_k,X)^2 - d(a_{k+1},X)^2 = -4c_k(w_k^T X - b_k) \\
        &\hphantom{aaaaaaaa}+2c_{k+1}(w_{k+1}^T X  - b_{k+1}) - c_{k+1}^2 ||w||_2^2 > 0 
    \end{align}
    Since $w_k^T X - b_k < 0$, we take the value $-1$, making the contribution of the first term positive and small. Similarly, we take $w_{k+1}^T X - b_{k+1} = -2||w||_2^2$. Hence,
    \begin{align}
        &\frac{4}{(2||w||_2^2)^k} - \frac{2}{(2||w||_2^2)^{k}} - \frac{1}{(2^{2k+2}||w||_2^{4k+2})} \\
        \label{eq:ldl_bound_last}
        &\hphantom{aaa}=\frac{2}{(2^k||w||_2^{2k})} - \frac{1}{(2^{2k+2}||w||_2^{4k+2})} > 0
    \end{align}
    The last inequality holds since $||w||_2^2 \geq 1$ and $k \geq 1$.
    
    Finally, we consider $k = m$ separately and since $w_m^T X < b_m$, we have
    \begin{align}
        \nonumber
        &d(a_m,X)^2 - d(a_{m+1},X)^2 = -4c_m(w_m^T X - b_m) > 0
    \end{align}
    The resolution claim follows by how small $c_m = \frac{1}{(2||w||_2^2)^m}$ is. Therefore, it becomes $O(m RES(\diag(WW^T)))$. The resolution of the point $X^*$ is bounded by the resolution of $||w||_2^2$ so asymptotically, it makes no difference in the resolution of the anchors.
\end{proof}

In addition, we can replace the regularity conditions in Theorem \ref{th:ldl} only with $m \leq n$ where $m$ is the depth of the list. Let $A^+$ denote the Moore-Penrose inverse of a matrix $A \in \mathbb{R}^{m\times n}$.

\begin{theorem}
    \label{th:ldl_generalized}
    Suppose that an $n$-input Linear Decision List $l(X)$ of depth $m$ is given with a weight matrix $W \in \mathbb{Z}^{m\times n}$ where $m \leq n$ and a bias vector $b \in \mathbb{Z}^m$. Then, there is an NN representation for $l(X)$ with $m+1$ anchors and resolution $O(RES(W^+)+mRES(\diag(WW^T)))$.
\end{theorem}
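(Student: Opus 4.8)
The plan is to recycle the anchor construction and the domination argument from the proof of Theorem~\ref{th:ldl} almost verbatim, but to discharge the two \emph{geometric} regularity conditions (equal norms and mutual orthogonality) by a direct magnitude estimate on the cross terms, using the Moore--Penrose inverse only to manufacture the reference point. Concretely, since $m \le n$ we may assume $W$ has full row rank (a dependent row is a removable gate), so the map $X \mapsto WX$ is onto $\mathbb{R}^m$ and $WX^* = b$ is solvable; the minimum-norm solution $X^* = W^+ b \in \mathbb{Q}^n$ satisfies $WX^* = WW^+ b = b$. Exactly as in Theorem~\ref{th:ldl} I would avoid binary points lying on the hyperplanes by shifting the reference to $X' = W^+\!\bigl(b - \tfrac12\mathbf{1}\bigr)$, so that $w_i^T(X - X') = w_i^T X - b_i + \tfrac12$ is a nonzero half-integer for every $X \in \{0,1\}^n$, hence of absolute value at least $\tfrac12$ (equivalently, double the weights to obtain integer gaps $\ge 1$). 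This is precisely where $m \le n$ is indispensable: for $m > n$ the $m$ hyperplanes generically share no common point and no such $X'$ exists.

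I would then keep the \emph{same} anchors, $a_i = X' - \sum_{j<i} c_j w_j + c_i w_i$ for $i \le m$ and $a_{m+1} = X' - \sum_{j<m} c_j w_j - c_m w_m$, now with general (non-orthogonal) $w_j$ and $c_i = \rho^{\,i}$ for a decay rate $\rho$ to be fixed in terms of $M = \max_i \lVert w_i\rVert_2^2$. Expanding $d(a_k,X)^2 - d(a_l,X)^2$ as in Eq.~\eqref{eq:ldl_sq_difference}, the \emph{linear} part is identical to the orthogonal case, because it only ever uses the inner products $w_j^T(X - X')$, which still read off $w_j^T X - b_j$. The only new contributions come from the norm difference $\lVert v_k\rVert_2^2 - \lVert v_l\rVert_2^2$, which no longer telescopes and instead carries cross terms of the form $c_j c_{j'}\, w_j^T w_{j'}$; each of these I would bound by Cauchy--Schwarz, $\lvert w_j^T w_{j'}\rvert \le M$.

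The remaining work is to re-verify the two inequalities \eqref{eq:ldl_cond_1}--\eqref{eq:ldl_cond_2} with these extra terms present. The leading term $-4c_k(w_k^T X - b_k)$ has magnitude $\Omega(c_k)$ by the gap bound, every later \emph{linear} term is $O(M c_{k+1}) = O(\rho M c_k)$, and every cross or quadratic term is $O(M c_k c_{k-1}) = O(M c_k^2 / \rho)$; choosing $\rho$ a sufficiently small inverse polynomial in $M$ (on the order of $1/M^{2}$) makes the $k$-th term dominate the entire tail uniformly in $k$ and $l$, so the sign of each distance difference is again governed solely by $w_k^T X - b_k$, and the domination principle goes through as before. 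For the resolution, $X'$ costs $RES(W^+)$ up to an $O(1)$ shift, and $c_m = \rho^{\,m}$ costs $O(m\log M) = O\!\bigl(m\,RES(\diag(WW^T))\bigr)$, yielding the stated $O\!\bigl(RES(W^+) + m\,RES(\diag(WW^T))\bigr)$.

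The main obstacle is Step~three: without orthogonality the norm differences no longer collapse, so the whole argument hinges on showing the geometric decay of the $c_i$ swamps the $O(M)$ cross terms \emph{uniformly}. I expect to handle this by splitting off the few smallest indices $k$ (where the prefix $\sum_{j<k}$ is short and can be checked directly) and bounding the rest by a single geometric-series tail estimate, so that a fixed choice of $\rho = \rho(M)$ works for all $k$ and $l$ simultaneously.
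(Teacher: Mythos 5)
Your construction reaches the same destination by a genuinely different route. The paper never bounds the cross terms at all: it defines the reference point recursively, demanding $w_i^T X^*_{(i)} = b_i$ for $X^*_{(i)} = X^* - \sum_{j<i} c_j w_j$, which absorbs every product $c_j w_i^T w_j$ into a modified bias vector $B_i = b_i + \sum_{j<i} c_j w_i^T w_j$; solving $WX^* = B$ via $W^+$ then makes the distance expansion collapse \emph{exactly} to the orthogonal-case formula of Theorem~\ref{th:ldl}, so no new inequalities have to be checked. You instead keep the naive reference point $X' = W^+\bigl(b - \tfrac12\mathbf{1}\bigr)$ and pay for it by re-verifying \eqref{eq:ldl_cond_1}--\eqref{eq:ldl_cond_2} with the non-telescoping norm differences present. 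Your accounting essentially closes: note only that the dominant cross term is not $c_k c_{k-1} M$ but $c_1 c_k M = \rho M c_k$ (the largest coefficient in the prefix $\sum_{j<k} c_j w_j$ is $c_1$, not $c_{k-1}$), yet your choice $\rho = \Theta(1/M^2)$ still beats it, and a single geometric-tail bound works uniformly in $k$ and $l$ --- the case-splitting on small indices you anticipate is unnecessary. Both proofs invoke $W^+$ and $m \le n$ at the same point (consistency of one $m\times n$ linear system) and land on the same resolution; the paper's trick buys freedom from inequality-chasing at the price of the more elaborate right-hand side $B$, whose resolution is already $m\,RES(\diag(WW^T))$, while yours keeps $X'$ simple but must drag $O(M\rho c_k)$ error terms through every comparison.

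One genuine flaw in your justification: the claim that ``a dependent row is a removable gate'' is false --- if $w_3 = w_1 + w_2$, the output of gate $3$ is not determined by the outputs of gates $1$ and $2$, so it cannot be deleted from the list. Full row rank is needed only so that $WX' = b - \tfrac12\mathbf{1}$ is consistent, and the correct repair is the paper's perturbation $W \mapsto W + \epsilon I_{m,n}$ with $0 < \epsilon < 1/2$, which preserves all gate outputs on $\{0,1\}^n$ and yields a full-rank matrix for all but finitely many $\epsilon$. With that substitution your argument goes through.
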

\begin{proof}
    First of all, this proof depends on the proof of Theorem \ref{th:ldl} with an explicit algorithm to find $X^*$, which is the most crucial step and will be done later in the proof. 
    
    We first assume that $w_i^T X \neq b_i$ for any $X \in \{0,1\}^n$ without loss of generality simply by changing the bias vector $b$ to $b - 0.5$. Compared to Theorem \ref{th:ldl}, since we do not assume any regularity conditions, this change is sufficient. In addition, we can assume without loss of generality that $W$ is full-rank by another perturbation argument. For example, for very small $\epsilon > 0$, we take $W' = W + \epsilon I_{m,n}$ where $I_{m,n}$ is the $m \times n$ sub-identity matrix with the first $m$ rows of the $I_{n\times n}$ and the threshold functions $\mathds{1}\{w_i^T X \geq b_i - 0.5\}$ are the same as $\mathds{1}\{w_i^T X + \epsilon X_i \geq b_i - 0.5\}$ when $\epsilon < 1/2$. Let $W_{m\times m}$ denote the $m \times m$ sub-matrix of $W$. $W'$ is full-rank if and only if
    \begin{align}
    \nonumber
        \det(W'(W')^T) &= \det\Big(WW^T + \epsilon(W_{m\times m}+(W_{m\times m})^T) \\
        \label{eq:determinant}
        &\hphantom{aa}+ \epsilon^2 I_{m\times m}\Big) \neq 0
    \end{align}
    Because Eq. \eqref{eq:determinant} is a finite polynomial in $\epsilon$ with at most $2m$ many roots, there are infinitely many choices for $0 < \epsilon < 1/2$.
    
    We have the same construction in the proof of Theorem \ref{th:ldl} with an additional assumption: Whenever we subtract $c_iw_i$ from $X^* - \sum_{j<i}c_jw_j$, the vector should satisfy $w_{i+1}^T x = b_{i+1}$. Define $X_{(i)}^* = X^* - \sum_{j<i}c_jw_j$ and $X_{(1)}^* = X^*$.
    \begin{align}
        a_i &= X^* - \sum_{j < i} c_jw_j + c_iw_i \text{ for } i = 1,\dots,m \\
        a_{m+1} &= X^* - \sum_{j < m} c_jw_j - c_mw_m
    \end{align}
    where $w_i^T X_{(i)}^* = b_i$.

    Under this assumption, we see that the squared distance differences are equivalent to the ones in the proof of Theorem \ref{th:ldl} (see Eq. \eqref{eq:ldl_sq_difference}). For $i < m$, we have
    \begin{align}
        \nonumber
        d(a_i,X)^2 &= |X| - 2X^TX_{(i)}^* + ||X_{(i)}^*||_2^2 \\
        \nonumber
        &\hphantom{aa}- 2c_i(w_i^T X - w_i^T X_{(i)}^*) + c_i^2||w_i||_2^2 \\
        \nonumber
        &= |X| - 2X^T(X^*-\sum_{j<i}c_jw_j) + ||X^*-\sum_{j<i}c_jw_j||_2^2 \\
        \nonumber
        &\hphantom{aa}- 2c_i(w_i^T X - b_i) + c_i^2||w_i||_2^2\\
        \nonumber
        &= |X| - 2X^TX^* + ||X^*||_2^2 \\
        \nonumber
        &\hphantom{aa}+2\sum_{j < i}c_j(w_j^T X - w_j^T X^*) + \Big(\sum_{j<i} c_jw_j\Big)^2 \\
        \nonumber
        &\hphantom{aa}- 2c_i(w_i^T X - b_i) + c_i^2||w_i||_2^2\\
        \nonumber
        &= |X| - 2X^TX^* + ||X^*||_2^2 \\
        \nonumber
        &\hphantom{aa}+2\sum_{j < i}\Big(c_j(w_j^T X - w_j^T X^* + \sum_{k < j} c_kw_j^Tw_k)\Big)\\
        \nonumber
        &\hphantom{aa}+ \sum_{j<i} c_j^2||w_j||_2^2- 2c_i(w_i^T X - b_i) + c_i^2||w_i||_2^2\\
        \nonumber
         &= |X| - 2X^TX^* + ||X^*||_2^2 \\
        \nonumber
        &\hphantom{aa}+2\sum_{j < i}c_j(w_j^T X - w_j^T (X^* - \sum_{k < j} c_kw_k))\\
        \nonumber
        &\hphantom{aa}+ \sum_{j<i} c_j^2||w_j||_2^2- 2c_i(w_i^T X - b_i) + c_i^2||w_i||_2^2\\
        \nonumber
        \label{eq:ldl_distance_new}
        &= |X| - 2X^TX^* + ||X^*||_2^2 \\
        \nonumber
        &\hphantom{aa}+2\sum_{j < i}c_j(w_j^T X - b_j) + \sum_{j<i} c_j^2||w_j||_2^2 \\
        &\hphantom{aa}- 2c_i(w_i^T X - b_i) + c_i^2||w_i||_2^2
    \end{align}
    One can observe that Eq.\eqref{eq:ldl_distance_new} is the same as Eq. \eqref{eq:ldl_distance} except the squared norms of $w_i$. Therefore, for a correct selection of $c_i$s, the construction should follow by satisfying the conditions Eq. \eqref{eq:ldl_cond_1} and \eqref{eq:ldl_cond_2}. We pick $c_i = c_{i-1} 2||w||_2^2$ and $c_1 = 1$ for simplicity where $||w||_2^2 = \max_k ||w_k||_2^2$. By similar bounding arguments, the same steps in the proof of Theorem \ref{th:ldl} could be followed.
    
    Now, we have to find an explicit $X^*$ to conclude the proof. Recall that $w_i^T X_{(i)}^* = b_i$ where $X_{(i)}^* = X^* - \sum_{j < i}c_jw_j$. Then, $w_i^T X^* = b_i + \sum_{j<i}c_jw_i^T w_j$. Clearly, this defines a linear system $WX^* = B$ where $B \in \mathbb{Q}^{m}$ and $B_i = b_i + \sum_{j<i}c_jw_i^T w_j$. Since $W$ is full-rank without loss of generality, there always exists $X^* = W^+B$ which solves the system exactly. 

    We observe that $b_i \leq ||w_i||^2$ and $|w_i^T w_j| \leq \max_k ||w_k||_2^2$. Since $c_i$s shrink geometrically, the resolution is $O(m\diag(WW^T))$ so that $RES(B) = mRES(\diag(WW^T))$. Hence,
    the resolution of $X^*$ becomes $O(RES(W^+) + mRES(\diag(WW^T)$. We conclude that the resolution of the construction is $O(RES(W^+) + mRES(\diag(WW^T)))$.
\end{proof}

\begin{corollary}
    \label{cor:sym_ldl}
    For any symmetric Boolean function $f(X)$ with $I(f)$ many intervals, $NN(f) \leq I(f)$. 
\end{corollary}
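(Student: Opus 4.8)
The plan is to realize $f$ as a shallow linear decision list and then feed it to Theorem \ref{th:ldl_generalized}, which manufactures an NN representation with exactly one anchor per leaf. Since $f$ is symmetric it depends only on $|X| = \sum_{i=1}^n x_i$, and its value is constant on each of the $I(f)$ intervals $[I_{j-1}+1, I_j]$. The guiding observation is that every interior interval boundary $I_j$ is detectable by a single linear threshold gate acting on $|X|$, so the whole function is captured by a list whose depth is one less than the number of intervals.

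Concretely, I would take the list $f_1, \dots, f_{I(f)-1}$ with $f_j(X) = \mathds{1}\{\sum_{i=1}^n x_i \le I_j\} = \mathds{1}\{\sum_{i=1}^n (-1)x_i \ge -I_j\}$, which is a genuine linear threshold function with integer weight vector $(-1,\dots,-1)\in\mathbb{Z}^n$ and integer bias $-I_j$. Reading the list top to bottom on an input with $|X| = s$, the first gate that fires is $f_j$ for the least index $j$ with $s \le I_j$; because the intervals are increasing, this least index is precisely the index of the interval containing $s$. Labeling leaf $j$ by the constant value of $f$ on the $j$-th interval, and the default leaf by the value on the last interval, makes the list compute $f$ exactly. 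Its depth is therefore $m = I(f)-1$.

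Next I would verify the single hypothesis of Theorem \ref{th:ldl_generalized}, namely $m \le n$. A symmetric Boolean function on $n$ bits is constant on each value of $|X| \in \{0,1,\dots,n\}$, so it has at most $n+1$ intervals; hence $I(f) \le n+1$ and $m = I(f)-1 \le n$. Applying Theorem \ref{th:ldl_generalized} to the list above produces an NN representation with $m+1 = I(f)$ anchors, which is exactly the asserted bound $NN(f) \le I(f)$. Only the size is claimed, so the resolution estimate furnished by that theorem is not needed here.

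The point that genuinely requires care — and the reason the cleaner Theorem \ref{th:ldl} cannot be used directly — is that all weight vectors of this list are the \emph{same} vector $(-1,\dots,-1)$: they are mutually parallel rather than orthogonal, so the regularity conditions fail and the weight matrix $W$ has rank $1$. This degeneracy is exactly what Theorem \ref{th:ldl_generalized} is built to tolerate: its proof first replaces $W$ by $W + \epsilon I_{m,n}$, which is full rank for all but finitely many $\epsilon$ and computes the same threshold functions on $\{0,1\}^n$ once $\epsilon < 1/2$, and then recovers $X^*$ via the Moore--Penrose inverse. I would simply note that this perturbation argument never uses any property of the original $W$ beyond $m \le n$, so it is insensitive to $W$ having rank $1$; the reduction goes through verbatim and the corollary follows.
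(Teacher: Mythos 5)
Your proposal is correct and follows essentially the same route as the paper: express the symmetric function as an LDL of depth $I(f)-1$ with gates $\mathds{1}\{|X| \le I_j\}$ and invoke Theorem \ref{th:ldl_generalized} to obtain $I(f)$ anchors. You actually supply more detail than the paper does --- in particular the explicit check that $m = I(f)-1 \le n$ and the observation that the rank-one weight matrix rules out Theorem \ref{th:ldl} but is tolerated by the perturbation argument in Theorem \ref{th:ldl_generalized} --- all of which is consistent with the paper's intent.
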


While this complexity result on symmetric Boolean functions is striking, it has already been proven \cite{kilic2023information}. Corollary \ref{cor:sym_ldl} follows by the simple construction of symmetric Boolean functions as LDLs. For instance, an LDL construction for an $8$-input symmetric Boolean function $f(X)$ with $I(f) = 5$ (see Eq. \eqref{eq:counterexample}) is given in Figure \ref{fig:ldl_sym} where each interval has the form $[I_{i-1}+1,I_{i}]$ for $i \in \{1,\dots,5\}$ (take $I_0 = - 1$).

\begin{equation}
    \label{eq:counterexample}
    \begin{tabular}{c|c c c}
        $|X|$ & $f(X)$ \\
        \cline{1-2}
         0 & \textcolor{blue}{1} & \hphantom{a} & $I_{1} = 0$\\
         1 & \textcolor{red}{\textbf{0}} & \hphantom{a} & $I_{2} = 1$\\ 
         2 & \textcolor{blue}{1} & \hphantom{a} & \\
         3 & \textcolor{blue}{1} & \hphantom{a} & \\
         4 & \textcolor{blue}{1} & \hphantom{a} & \\
         5 & \textcolor{blue}{1} & \hphantom{a} & \\
         6 & \textcolor{blue}{1} & \hphantom{a} & $I_{3} = 6$\\
         7 & \textcolor{red}{\textbf{0}} & \hphantom{a} & $I_4 = 7$\\
         8 & \textcolor{blue}{1} & \hphantom{a} & $I_5 = 8$
    \end{tabular}
\end{equation}

\begin{figure}[h!]
    \centering
    \begin{tikzpicture}
    \newdimen\nodeDist
    \nodeDist=20mm
    \node [rectangle,draw] (A) {\footnotesize$\mathds{1}\{|X| \leq 0\}$};
    \path (A) ++(-120:\nodeDist) node [rectangle,draw] (B) {\footnotesize$\mathds{1}\{|X| \leq 1\}$};
    \path (A) ++(-60:\nodeDist) node (C) {\textcolor{blue}{1}};
    
    \path (B) ++(-120:\nodeDist) node [rectangle,draw] (D) {\footnotesize$\mathds{1}\{|X| \leq 6\}$};
    \path (B) ++(-60:\nodeDist) node (E) {\textcolor{red}{\textbf{0}}};
    
    \path (D) ++(-120:\nodeDist) node [rectangle,draw] (F) {\footnotesize$\mathds{1}\{|X| \leq 7\}$};
    \path (D) ++(-60:\nodeDist) node (G) {\textcolor{blue}{1}};

    \path (F) ++(-120:\nodeDist) node (H)
    {\textcolor{blue}{1}};
    \path (F) ++(-60:\nodeDist) node (I) {\textcolor{red}{\textbf{0}}};
    
    \draw (A) -- (B) node [left,pos=0.25] {0};
    \draw (A) -- (C) node [right,pos=0.25] {1};
    \draw (B) -- (D) node [left,pos=0.25] {0};
    \draw (B) -- (E) node [right,pos=0.25] {1};
    \draw (D) -- (F) node [left,pos=0.25] {0};
    \draw (D) -- (G) node [right,pos=0.25] {1};
    \draw (F) -- (H) node [left,pos=0.25] {0};
    \draw (F) -- (I) node [right,pos=0.25] {1};
\end{tikzpicture}
    \caption{A Linear Decision List of depth $4$ for the symmetric Boolean function in Eq.\eqref{eq:counterexample} with $I(f) = 5$.}
    \label{fig:ldl_sym}
\end{figure}
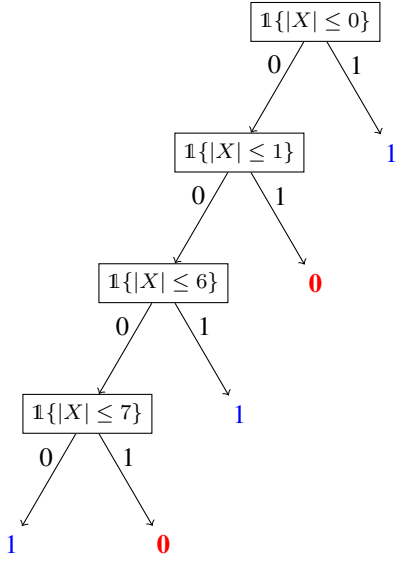

\begin{conjecture}
    Let $f(X)$ be a Boolean function. Then, $NN(f) \leq LDL(f) + 1$ where $LDL(f)$ is the smallest depth of linear decision lists computing $f(X)$.
\end{conjecture}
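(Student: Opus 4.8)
The plan is to read the conjecture as the natural strengthening of Theorem~\ref{th:ldl_generalized}: that theorem already gives $NN(f)\le LDL(f)+1$ whenever an optimal list has depth $m\le n$, so the entire content of the conjecture is to remove the dimension restriction $m\le n$ together with the regularity conditions. First I would normalize exactly as in Section~\ref{sec:ldl_edl}, replacing each bias $b_i$ by $b_i-\tfrac12$ so that no $X\in\{0,1\}^n$ lies on any hyperplane $w_i^T X=b_i$; this yields a uniform gap $\gamma=\min_{i,X}|w_i^T X-b_i|>0$ that will absorb every approximation made afterwards. The goal is then an $(m+1)$-anchor representation, one anchor per leaf, respecting the domination principle.

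My main approach would be induction on the depth $m$, peeling off the top gate. Writing $L_i(X)=2a_i^T X-\|a_i\|^2$, the nearest anchor is $\arg\max_i L_i(X)$, and for cascade anchors $a_i=X_{(i)}^*+c_i w_i$ with $w_i^T X_{(i)}^*=b_i$ one has $L_i(X)=(2{X_{(i)}^*}^T X-\|X_{(i)}^*\|^2)+(2c_i(w_i^T X-b_i)-c_i^2\|w_i\|^2)$, a reference term plus a clean gate term. The inductive step would take a representation of the tail list (gates $2,\dots,m$), assumed to be contained in a tiny ball about some point $q$ on the side $w_1^T X<b_1$, and adjoin a single new anchor for the leaf $z_1$ at the reflection of $q$ across $\{w_1^T X=b_1\}$, in the spirit of Theorem~\ref{th:polytope}. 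Since every tail anchor sits within $\rho$ of $q$, each bisector between the new anchor and a tail anchor lies within $O(\rho)$ of the top hyperplane, so taking $\rho$ below $\gamma$ forces the new anchor to win exactly on $\{X:w_1^T X\ge b_1\}$ and lose elsewhere, reproducing the decision-list node and costing only one extra anchor.

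The step I expect to be the crux is precisely the one that is free when $m\le n$ and fails in general: keeping the reference terms $2{X_{(i)}^*}^T X-\|X_{(i)}^*\|^2$ essentially equal across anchors (equivalently, keeping all anchors clustered) while each $X_{(i)}^*$ still sits on its own hyperplane. For $m\le n$ one solves $WX^*=B$ with the Moore--Penrose inverse and the cascade shifts $c_i w_i$ are negligible, so all reference points nearly coincide. For $m>n$ the rows of $W$ are necessarily dependent and the hyperplanes have no common (even approximate) intersection; the parity list, whose $n$ parallel hyperplanes have empty common intersection, already illustrates this at the boundary, and the repair $W+\epsilon I_{m,n}$ used there produces independent rows only when $m\le n$. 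Writing $a_k=p+\sum_i t_{ki}w_i$, the demand that every pairwise bisector carry the correct offset becomes a linear system governed by the Gram matrix $G=WW^T$ of rank at most $n<m$. I would attack this not by exact equality but by solving the offset constraints in a feasibility (least-squares) form and proving the residual can be driven below $\gamma\|a_k-a_l\|$ using the geometric decay of the $c_i$; showing such a clustered placement always exists for an adversarially chosen list is the heart of the matter and the single thing separating the conjecture from the proven range $m\le n$.

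If this succeeds the resolution should again be $O(m\,RES(\diag(WW^T)))$, matching Theorem~\ref{th:ldl}, since the only new cost is the depth-$m$ geometric decay. As a fallback I would try lifting the inputs into $\mathbb{R}^{n'}$ with $n'\ge m$, applying Theorem~\ref{th:ldl_generalized} there, and projecting the anchors back to $\mathbb{R}^n$; the obstruction is that the dropped coordinates contribute a per-anchor additive constant to the squared distances that is not of the required form $-\tfrac14\|\nabla L_i\|^2$, so the projection yields a max-affine classifier rather than a genuine Euclidean NN representation, and realizing those constants by honest anchors would need additional work.
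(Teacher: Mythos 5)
First, a point of calibration: the paper does not prove this statement --- it is stated as a conjecture, and the only thing the paper establishes is the partial result that it holds when $LDL(f)\le n$, via Theorem~\ref{th:ldl_generalized}. So your proposal should be judged as an attempt at an open problem, and as such it contains a genuine gap rather than a proof. You correctly locate where the difficulty lives: for $m>n$ the system $WX^*=B$ (with $B_i=b_i+\sum_{j<i}c_jw_i^Tw_j$) is overdetermined, the Gram matrix $WW^T$ has rank at most $n<m$, and there is in general no point, even approximately, incident to all $m$ hyperplanes. But identifying the obstruction is where your argument stops. The proposed resolution --- solve the offset constraints in least-squares form and ``prove the residual can be driven below $\gamma\|a_k-a_l\|$ using the geometric decay of the $c_i$'' --- is precisely the claim that needs a proof, and you give no argument for it. The residual of the least-squares solution is controlled by the distance of $B$ from the row space of $W$, which an adversary can make large (on the order of the weights and biases themselves), while the gate terms $2c_i(w_i^TX-b_i)$ that must dominate it decay geometrically in $i$. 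Nothing in the proposal shows these residuals can be absorbed consistently with the domination ordering; indeed, if they could always be absorbed, the $m\le n$ hypothesis in Theorem~\ref{th:ldl_generalized} would be unnecessary, and the paper explicitly leaves that case open.

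The inductive framing has a second, related gap. You adjoin one anchor by reflecting across $\{w_1^Tx=b_1\}$ a point $q$ near which ``the representation of the tail list'' is assumed to be contained in a tiny ball. But an NN representation is not invariant under contracting its anchors toward a point: shrinking the anchor set moves every pairwise bisector, so a valid $m$-anchor representation of the tail list need not remain valid after being clustered into a ball of radius $\rho<\gamma$ about a prescribed $q$. Establishing that the tail list \emph{admits} such a clustered representation is exactly the same reference-point placement problem you flagged as the crux, so the induction does not reduce the difficulty --- it relocates it. The fallback (lift to $\mathbb{R}^{n'}$ with $n'\ge m$ and project back) fails for the reason you yourself note: the dropped coordinates contribute per-anchor constants that turn the classifier into a max-affine rule rather than a Euclidean nearest-neighbor rule, and realizing arbitrary additive constants with honest anchors is not generally possible. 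In short, the proposal is a reasonable research plan that correctly isolates the open step, but it does not prove the conjecture, and the paper offers no proof to compare it against beyond the $LDL(f)\le n$ case.
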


By Theorem \ref{th:ldl_generalized}, the conjecture is true when $LDL(f) \leq n$.

We now focus on the NN representations of EDLs. For EDLs, we give a construction idea similar to circuits of $\text{OR}\circ\text{ELT}$ with a slight change in parameters to implement the domination principle. 

\begin{theorem}
    \label{th:edl}
    Suppose that an $n$-input Exact Decision List $l(X)$ of depth $m$ is given under regularity conditions. Then, there is an NN representation for $l(X)$ with $(m+1)2^m$ anchors and resolution $O(\log{m} + RES(\diag(WW^T)))$.
\end{theorem}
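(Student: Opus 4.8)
The plan is to build on the $\text{OR}\circ\text{ELT}$ construction of Theorem~\ref{th:sym_elt} rather than on the geometric-coefficient scheme of Theorem~\ref{th:ldl}, since the target resolution $O(\log m + RES(\diag(WW^T)))$ is exactly that of the symmetric construction and rules out the exponentially shrinking coefficients $c_i=(2\|w\|_2^2)^{-i}$ that forced the $O(m\,RES)$ bound in the LDL case. As there, I would keep a single scale $d\approx \tfrac{1}{m\|w\|_2^2}$ and place signed-sum anchors $X^*+d\sum_{j}s_j w_j$; exploiting the regularity conditions (orthogonality, common norm, and a base point $X^*$ with $WX^*=b$, shifted by a half-integer multiple of $\sum_j w_j$ if a gate must be kept strictly off its hyperplane), the anchor-independent terms $|X|-2X^TX^*+\|X^*\|_2^2$ drop out and the squared distance is governed by $-2d\sum_j s_j\delta_j+d^2\|w\|_2^2\sum_j s_j^2$, where $\delta_j:=w_j^TX-b_j$. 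Minimizing over the signs forces $s_j=\operatorname{sign}(\delta_j)$ and shows that the closest anchor reveals precisely the set of non-equal gates together with the signs of their deviations; equal gates contribute $\delta_j=0$ and are indifferent to their sign.

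The key difference from the symmetric case is that an EDL's top gate is a domination gate: its output is $z_k$ for the smallest index $k$ with $w_k^TX=b_k$, so I need the \emph{position} of the leading equality rather than merely the \emph{count} of equalities. For an input with several equalities the symmetric construction leaves the corresponding anchors tied, so the promised ``slight change in parameters'' is a position-graded perturbation of the anchor offsets that breaks every such tie in favour of the smallest-index equality, thereby realising the domination principle. Labelling each anchor by the leaf output $z_k$ of the leading equality it encodes then yields the representation; ranging over the $m+1$ possible leading positions (including the all-passed leaf $m+1$) and the $2^m$ sign patterns gives the stated $(m+1)2^m$ anchors. Correctness reduces to the two conditions in the spirit of Eq.~\eqref{eq:ldl_cond_1} and Eq.~\eqref{eq:ldl_cond_2}: if gate $k$ is the leading equality then (i) its anchor must beat every lower-priority leaf $l>k$, and (ii) if gate $k$ is instead passed the input must fall to leaf $k+1$.

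The main obstacle is exactly this tie-break: the position-graded perturbation must simultaneously impose the full lexicographic priority on the equalities while remaining small enough not to disturb the first-order reward $\sum_{j}|\delta_j|$ that identifies the equal-set, and it must cost only $O(\log m)$ additional resolution. This is where the exact gate is essential: because a passed gate satisfies $|\delta_j|\ge 1$ whereas an equality gives $\delta_j=0$, there is a clean unit gap to exploit, and a perturbation that is only polynomial in $m$ (hence $O(\log m)$ bits) suffices to enforce domination --- in sharp contrast to the exponentially large coefficients, and resulting $O(m\,RES)$ resolution, that the linear-threshold gates of Theorem~\ref{th:ldl} required. Once the perturbation is pinned down, the two inequalities follow by the same loose bounding $|\delta_j|\le 2\|w\|_2^2$ and geometric-style estimates used for the LDL, the anchor count is immediate, and the resolution is $O(\log m + RES(\diag(WW^T)))$ from the single scale $d$ together with the $O(\log m)$ perturbation.
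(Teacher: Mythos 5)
Your plan is structurally the same as the paper's proof: anchors of the form $X^* + (\text{signed sum of the } w_j)$ at a single scale, one anchor family per leaf, reliance on the unit gap $|w_j^TX-b_j|\ge 1$ for non-equal exact gates, and a position-dependent modification to encode the index of the leading equality; the count $(m+1)2^m$ and the resolution target are accounted for the same way. However, the step you defer --- ``once the perturbation is pinned down'' --- is the entire technical content of the theorem, and your proposal gives no candidate for it, only an argument that one should exist. The paper's choice is concrete: the anchor for leaf $k$ is $X^* + d\,u_{jk} + (-1)^{j_m}c_k w_k$ where $u_{jk}$ is a $\pm$ sum of the $m-1$ weights \emph{other than} $w_k$, with $d = 1/\|w\|_2^2$ and $c_k = \tfrac{k}{(m+1)\|w\|_2^2}$, so the ``perturbation'' is a \emph{reduction} of the coefficient on the distinguished weight from $d$ to $c_k$, graded linearly and increasingly in $k$ (smaller index $\Rightarrow$ smaller quadratic penalty $c_k^2\|w\|_2^2$ $\Rightarrow$ wins ties among equalities). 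Two points in the verification are not automatic from your sketch. First, orthogonality gives $w_k^Tu_{jk}=0$, which is what guarantees the perturbation does not interfere with the sign optimization that identifies the non-equal set; with your ``full signed sum plus offset'' form you would have to check the analogous cross term vanishes. Second, when gate $k$ is \emph{passed}, the leaf-$k$ anchor forfeits linear reward $2(d-c_k)|w_k^TX-b_k| \ge 2(d-c_k)$ relative to anchors that keep full weight $d$ on $w_k$, and this loss must beat the quadratic saving $(c_l^2-c_k^2)\|w\|_2^2$ for every later leaf $l>k$; this is the paper's Case 2, which reduces to $\bigl(\tfrac{k}{m+1}-1\bigr)^2>0$ only for the specific linear grading above. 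So the inequality does hold with room to spare, but it is exactly the place where an unexhibited ``polynomial-in-$m$ perturbation'' could fail if graded in the wrong direction or with the wrong magnitude relative to $d$.

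A smaller correction: for exact gates no half-integer shift of $X^*$ is needed (that device is for the linear-threshold constructions); the paper keeps $WX^*=b$ as given by the third regularity condition.
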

\begin{proof}
    We essentially construct a NN representation for the $\text{OR} \circ \text{ELT}$ type of circuits (consider Corollary \ref{cor:sym_eq}) and modify it to obtain the \textit{domination principle}. We consider the anchors as follows similar to the proof of Theorem \ref{th:sym_elt} with two types. We assume $d > c_i$ and all $c_i > 0$ except $d = c_{m+1} > 0$.
    \begin{align}
        a_{jk} &= X^* + d u_{jk} + (-1)^{j_m} c_k w_k \text{ for } k \in \{1,\dots,m\} \\
        a_{j(m+1)} &= X^* + d u_{jm} + (-1)^{j_m} c_{m+1} w_m
    \end{align}
    where $u_{jk} = \pm w_1 \pm \dots \pm w_{k-1} \pm w_{k+1} \pm \dots \pm w_m$ (only $w_k$ is excluded) for $k \in \{1,\dots,m\}$ and $j_m \in \{0,1\}$. Also, $c_{m+1} = d$. The sign pattern is given by the binary expansion of $j-1$ in $m-1$ bits as in the proof of Theorem \ref{th:sym_elt}. For example, for $m = 5$, $j-1 = 4$ gives $(j_1,j_2,j_3,j_4) = (0,0,1,0)$ and $u_{52} = w_1+w_3-w_4+w_5$. If, in addition, $j_5 = 0$, then we find $a_{52} = d(w_1 + w_3 -w_4 + w_5) + c_2w_2$. In comparison, if $j_5 = 1$, we obtain $a_{(20)2} = d(w_1 + w_3 - w_4 + w_5) - c_2w_2$.

    We have the following squared Euclidean norm expression for this construction.
    \begin{align}
    \nonumber
        d(a_{jk},X)^2 &= |X| - 2X^T X^* + ||X^*||^2 \\
        \nonumber
            &\hphantom{aaa} -2du_{jk}^T(X - X^*) + d^2||u_{jk}||_2^2 \\
            \nonumber
            &\hphantom{aaa} -2j_m c_k (w_k^T X - b_k) + c_k^2 ||w_k||_2^2 \\
            &\hphantom{aaa} -2j_mdc_k w_k^T u_{jk}
    \end{align}

    By the orthogonality assumption and the constructions of $u_{jk}$s, we have $w_k^T u_{jk} = 0$. Since our goal is to find the anchor minimizing the Euclidean distance, there is a $j = j^*$ such that
    \begin{align}
    \nonumber
        d(a_{j^*k},X)^2 &= |X| - 2X^T X^* + ||X^*||^2 \\
        \nonumber
            &\hphantom{aaa} -2d\sum_{i \neq k} |w_i^TX - b_i| + d^2(m-1)||w||_2^2 \\
            &\hphantom{aaa} -2c_k |w_k^T X - b_k| + c_k^2 ||w||_2^2
    \end{align}
    Given the optimal selection of $j$s minimizing the Euclidean distance, we have to find the argument $k$ which will globally minimize this.
    
    For an EDL, we see that $z_l$ is picked if and only if $(w_1^T X \neq b_1,\dots, w_{l-1}^T X \neq b_{l-1}, w_l^T X = b_l, \times,\dots, \times)$ where we have inequalities for $k < l$ and a \textit{don't care} region for $l < k$. We will deal with $(w_1^T X \neq b_1, \dots, w_m^T X \neq b_m)$ later.
    
    We want $a_{j^*k}$ to be the closest anchor to $X$ for $k = l$ and some $j^*$ for $(w_1^T X \neq b_1,\dots, w_{l-1}^T X \neq b_{l-1}, w_l^T X = b_l, \times,\dots, \times)$. Hence, when we compare different $k,l$ we get
    \begin{align}
        \label{eq:edl_nec}
    \nonumber
        &d(a_{j^*l},X)^2 - d(a_{j^+k},X)^2 \\
        &\hphantom{aaaa}= -2(d-c_k)|w_k^T X - b_k| + (c_l^2 - c_k^2)||w||_2^2 < 0
    \end{align}
    Note that $w_l^T X = b_l$ so that term does not appear.
    
    \textbf{\underline{Case 1 $(l < k)$:}} This is the simple case. The inequality in Eq. \eqref{eq:edl_nec} is the tightest when $|w_k^T X - b_k| = 0$. Then, for $k \leq m$, we obtain $c_l < c_k$ for $l < k$ as a necessary condition. $k = m+1$ is trivial since $d > c_i$ for all $i \in \{1,\dots,m\}$ and $d(a_{j^*l},X)^2 - d(a_{j^+(m+1)},X)^2 = (c_l^2 - d^2)||w||_2^2 < 0$.

    \textbf{\underline{Case 2 $(k < l \leq m)$:}} The tightest Eq. \eqref{eq:edl_nec} becomes is when $|w_k^T X - b_k| = 1$ and we have
    \begin{equation}
        c_k^2 - \frac{2}{||w||_2^2} c_k + 2\frac{d}{||w||_2^2} - c_l^2 > 0
    \end{equation}
    Let $d = 1/||w||_2^2$ and $c_i = \frac{i}{(m+1)||w||_2^2}$ for $i \in \{1,\dots,m\}$. Then, we obtain
    \begin{equation}
        \frac{k^2}{(m+1)^2} - 2\frac{k}{m+1} + 2 - \frac{l^2}{(m+1)^2} > 0
    \end{equation}
    Since $l \leq m$, the tightest this inequality becomes is when the value of the fourth term is $1$. Then, we obtain
    \begin{align}
        \Big(\frac{k}{(m+1)^2} - 1\Big)^2 > 0
    \end{align}
    which is true for since $k \neq m+1$.

    \textbf{\underline{Case 3 $(l = m+1)$:}} Finally, we consider $(w_1^T X \neq b_1, \dots, w_m^T X \neq b_m)$. For this case, we claim that for any $k \in \{1,\dots,m\}$,
    \begin{align}
    \nonumber
        &d(a_{j^*(m+1)},X)^2 - d(a_{j^+k},X)^2 \\
        &\hphantom{aaaa}= -2(d-c_k)|w_k^T X - b_k| + (d^2 - c_k^2)||w||_2^2 < 0
    \end{align}
    Take $|w_k^T X - b_k| = 1$ similarly. Then,
    \begin{align}
        c_k^2 - \frac{2}{||w||_2^2} c_k + 2\frac{d}{||w||_2^2} - d^2 > 0
    \end{align}
    Consider $d = 1/||w||_2^2$ and $c_i = \frac{i}{(m+1)||w||_2^2}$ for $i \in \{1,\dots,m\}$. We get
    \begin{align}
     \frac{k^2}{(m+1)^2} -2\frac{k}{m+1} + 2 - 1  > 0 \\
    \Big(\frac{k}{m+1} - 1\Big)^2 > 0
    \end{align}
    which is true since $k \neq m+1$.
    
    This shows that the construction works. The size of representation is $(m+1)2^{m}$ by counting through $j$s and $k$s. Similar to the proof of Theorem \ref{th:sym_elt}, the resolution is $O(\log{m} + RES(\diag(WW^T)))$.
\end{proof}

We note that the idea for Theorem \ref{th:edl} works for LDLs as well with $(m+1)2^m$ many anchors and a possible resolution improvement from $O(m)$ to $O(\log{m})$.

\begin{corollary}
    \label{cor:edl}
    Let $f(X)$ be the $2nm$-input Boolean function $\textup{OMB}_m \circ \textup{EQ}_{2n}$ where there are $m$ many disjoint $2n$-input EQ functions in the first layer. Then, there is an NN representation with $(m+1)2^m$ anchors and $O(n)$ resolution.
\end{corollary}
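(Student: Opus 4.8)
The plan is to observe that $\text{OMB}_m \circ \text{EQ}_{2n}$ is an Exact Decision List of depth $m$ and then invoke Theorem \ref{th:edl} directly; the only substantive work is justifying this reduction, checking the regularity conditions for the disjoint $\text{EQ}$ gates, and evaluating the resolution term $RES(\diag(WW^T))$ for this particular weight matrix.

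First I would argue the reduction. Writing $z_i = \text{EQ}_{2n}(X_i,Y_i)$ for the output of the $i$-th exact threshold gate, the top gate $\text{OMB}_m(Z) = \mathds{1}\{\sum_{i=1}^m (-1)^{i-1} 2^{m-i} z_i > 0\}$ is a DOM gate: if $k$ is the first index with $z_k = 1$, then $\big|\sum_{i>k}(-1)^{i-1}2^{m-i}z_i\big| \le 2^{m-k}-1 < 2^{m-k}$, so the leading term $(-1)^{k-1}2^{m-k}$ fixes the sign of the sum independently of $z_{k+1},\dots,z_m$. Hence the output equals $1$ exactly when the first satisfied gate sits at an odd index, and $0$ when no gate fires since the sum is then zero. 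This is precisely the decision-list semantics of an EDL with alternating leaf labels, so $f \in \text{DOM}\circ\text{ELT}$ is an EDL of depth $m$.

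Next I would verify the three regularity conditions. Expressing each gate as $\mathds{1}\{\sum_{j=1}^n 2^{j-1}(x_j - y_j) = 0\}$, every row of $W$ carries the entries $\pm 2^{0},\dots,\pm 2^{n-1}$ on its own block of $2n$ coordinates and zeros elsewhere; thus all rows share the squared norm $||w||_2^2 = 2\sum_{j=1}^n 4^{j-1} = \tfrac{2}{3}(4^n-1)$ (condition 1), the disjoint supports force $w_i^T w_j = 0$ for $i \neq j$ (condition 2), and $X^* = \mathbf{0}$ solves $WX^* = b = \mathbf{0}$ because every $\text{EQ}$ has bias $0$ (condition 3). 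With all hypotheses met, Theorem \ref{th:edl} supplies $(m+1)2^m$ anchors and resolution $O(\log m + RES(\diag(WW^T)))$. Since $\diag(WW^T)$ is the constant vector $\tfrac{2}{3}(4^n-1)$, we get $RES(\diag(WW^T)) = O(n)$, which dominates the $\log m$ term in the intended parameter range and yields the claimed $O(n)$ resolution.

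I do not expect a genuine obstacle, as the corollary is a direct specialization of Theorem \ref{th:edl}; the only points needing care are confirming that the specific alternating powers of two in $\text{OMB}_m$ reproduce the exact-decision-list labels (rather than some unrelated DOM function) and that $X^* = \mathbf{0}$ is a legitimate feasible point, so that the anchor construction of Theorem \ref{th:edl} — built from $X^*$, the scaled combinations $d\,u_{jk}$, and the offsets $(-1)^{j_m} c_k w_k$ — inherits resolution $O(n)$ rather than anything larger.
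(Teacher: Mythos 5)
Your proposal is correct and matches the paper's intent: the corollary is stated as a direct specialization of Theorem \ref{th:edl} (the paper gives no separate proof), and your verification of the EDL reduction for the alternating DOM weights, the three regularity conditions for disjoint equal-norm EQ gates with $X^*=\mathbf{0}$, and the bound $RES(\diag(WW^T)) = O(n)$ supplies exactly the details the paper leaves implicit. The only caveat, which you already flag, is that the stated $O(n)$ resolution absorbs the $\log m$ term from Theorem \ref{th:edl}, exactly as the paper's Table \ref{tab:nn} does.
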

\section{Conclusion and Future Directions}
\label{sec:conclusion}

In this work, the NN representations of $\text{SYM}\circ\text{LT}$, $\text{SYM}\circ\text{ELT}$, $\text{DOM}\circ\text{LT}$, and $\text{DOM}\circ\text{ELT}$ are treated. These circuits include many important functions in Boolean analysis and Circuit Complexity Theory. Novel NN constructions are provided for some of these functions. Finding similar constructions for $\text{LT}\circ\text{LT}$ and removing the regularity constraints are future challenges. In addition, the treatment of circuits with depth $3$ or higher could give new insights to the theory of NN representations.

\clearpage
\printbibliography

\end{document}